\documentclass[conference]{IEEEtran}
\makeatletter
\def\ps@headings{%
\def\@oddhead{\mbox{}\scriptsize\rightmark \hfil \thepage}%
\def\@evenhead{\scriptsize\thepage \hfil \leftmark\mbox{}}%
\def\@oddfoot{}%
\def\@evenfoot{}}
\makeatother
\pagestyle{headings}

  \usepackage{graphicx}
  \usepackage{subfigure}
  \graphicspath{{./fig/}{../jpeg/}}

%
\usepackage{amsmath}
\usepackage{amssymb}
\usepackage{amsmath}
\usepackage{amsfonts}
\usepackage{color}
%

\newtheorem{definition}{Definition}
\newtheorem{theorem}{Theorem}
\newtheorem{lemma}{Lemma}

\setlength{\parskip}{0cm}

%
\usepackage{algorithmic}
\usepackage{algorithm}
\hyphenation{op-tical net-works semi-conduc-tor}

\begin{document}
%
\title{Locating Multiple Ultrasound Targets in Chorus}

\author{\IEEEauthorblockN{Lei Song}
\IEEEauthorblockA{Institute for Interdisciplinary Information Sciences, \\ Tsinghua University, Beijing, P.R.China \\
Email:leisong03@gmail.com}
\and
\IEEEauthorblockN{Yongcai Wang}
\IEEEauthorblockA{Institute for Interdisciplinary Information Sciences, \\ Tsinghua University, Beijing, P.R.China \\
Email: wangyc@tsinghua.edu.cn}
}

\maketitle
\begin{abstract}
Ranging by Time of Arrival (TOA) of Narrow-band ultrasound (NBU) has been widely used by many locating systems for its characteristics of low cost and high accuracy.  However, because it is hard to support code division multiple access in narrowband signal, to track multiple targets, existing  NBU-based locating systems generally need to assign exclusive time slot to each target to avoid the signal conflicts.  Because the propagation speed of ultrasound is slow in air, dividing exclusive time slots on a single channel causes the location updating rate for each target rather low,  leading to unsatisfied tracking performances as the number of targets increases. In this paper, we investigated a new multiple target locating method using NBU, called \emph{UltraChorus}, which is to locate multiple targets while allowing them sending NBU signals simultaneously, i.e., in chorus mode. It can dramatically increase the location updating rate. In particular, we investigated by both experiments and theoretical analysis on the  necessary and sufficient conditions for resolving the conflicts of multiple NBU signals on a single channel, which is referred as \emph{the conditions for chorus ranging and chorus locating}. To tackle the difficulty caused by the anonymity of the measured distances,  we further developed \emph{consistent position generation algorithm} and \emph{probabilistic particle filter algorithm} to label the distances by sources, to generate reasonable location estimations, and to disambiguate the motion trajectories of the multiple concurrent targets based on the  anonymous distance measurements.  Extensive evaluations by both simulation and testbed were carried out, which verified the effectiveness of our proposed theories and algorithms.  
\end{abstract}
\section{Introduction}
\begin{figure*}[t]
\centering
\subfigure[one target, one receiver]{\begin{minipage}[c]{0.12\textwidth}
    \includegraphics[width=\textwidth]{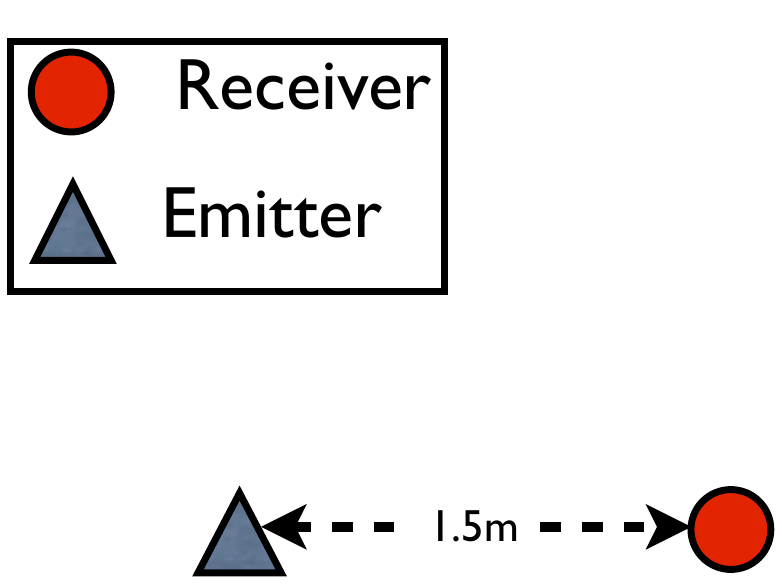}
\label{fig:1TO1OSC:a}
\end{minipage}
\begin{minipage}[c]{0.15\textwidth}
 \centering
 \includegraphics[width=\textwidth]{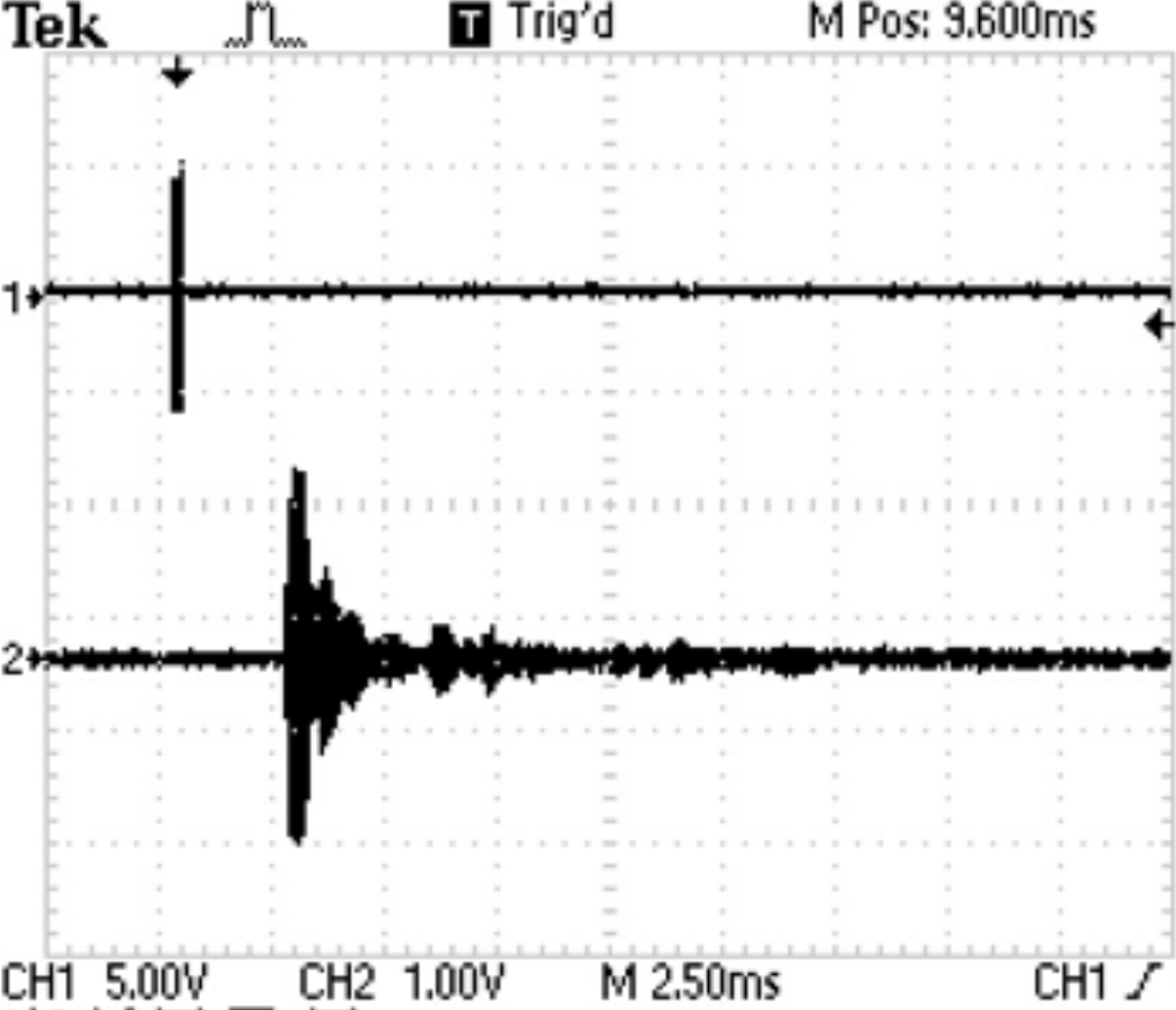}
\label{fig:1TO1OSC:b}
\end{minipage}
}
\hspace{0.03\textwidth}
\subfigure[two targets have different distances to the receiver]{\begin{minipage}[c]{0.12\textwidth}
 \centering
   \includegraphics[width=\textwidth]{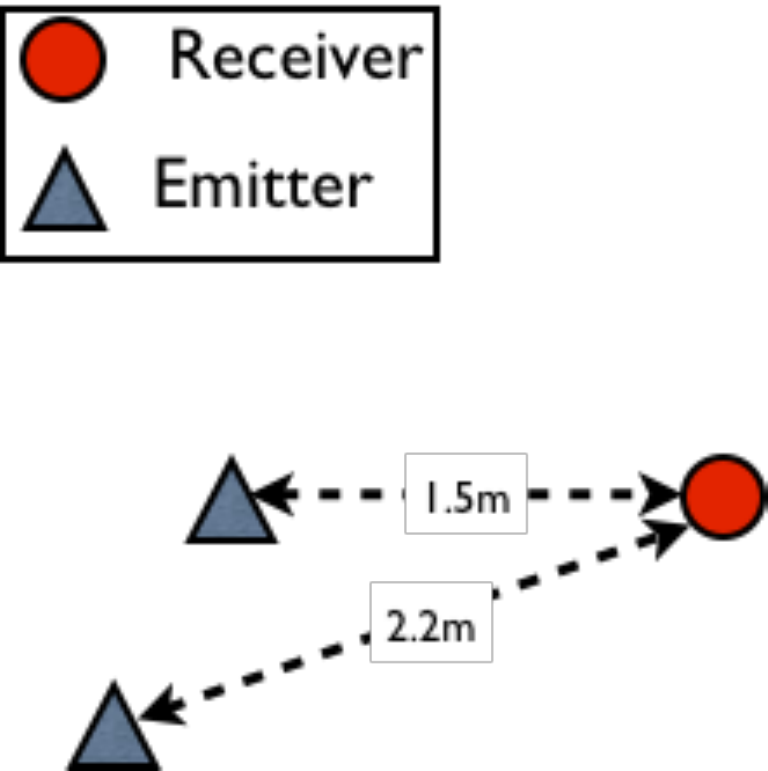}
\label{fig:2TO1OSC_CLR:a}
\end{minipage}%
\begin{minipage}[c]{0.15\textwidth}
    \centering
    \includegraphics[width=\textwidth]{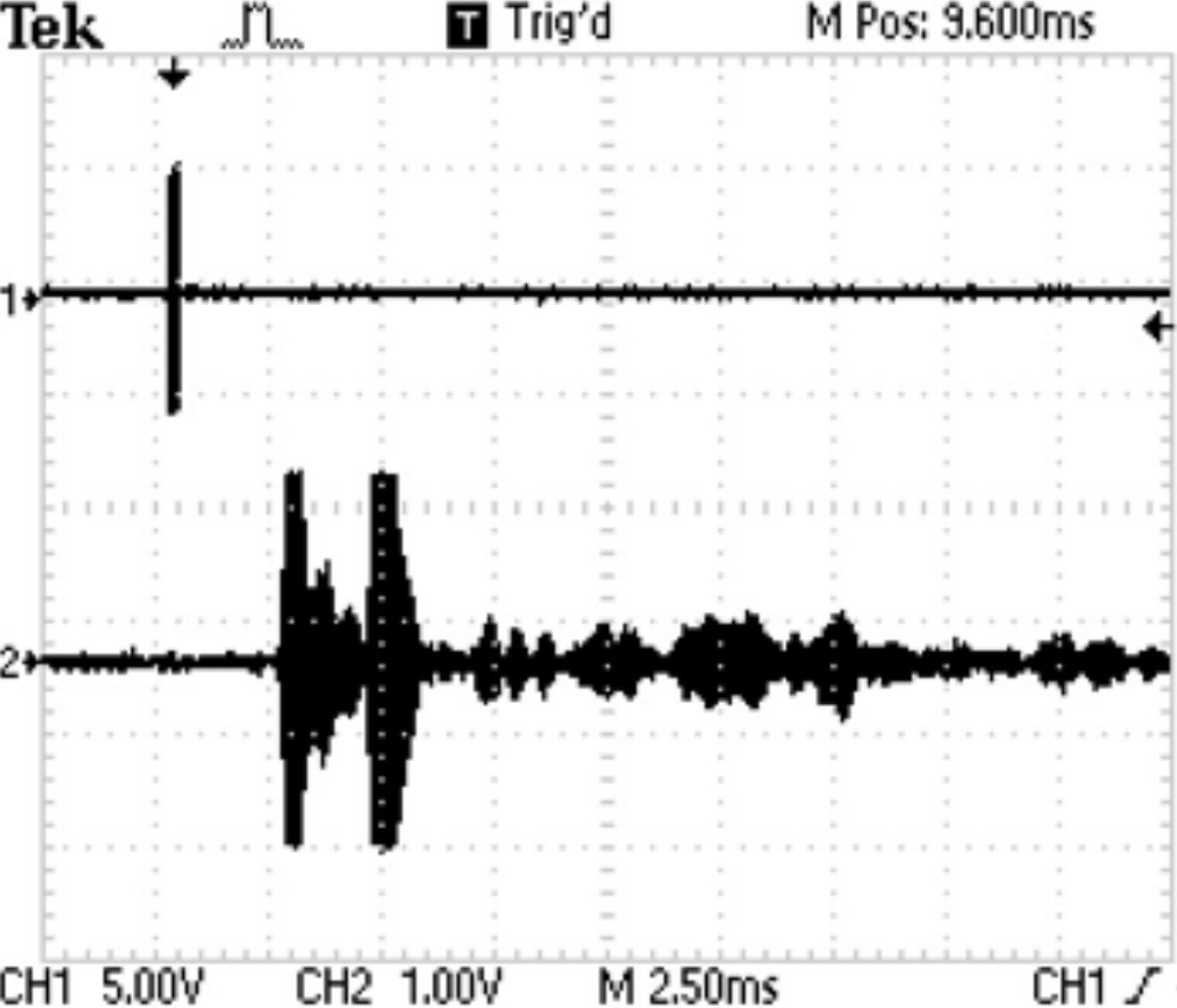}
\label{fig:2TO1OSC_CLR:b}
\end{minipage}}
\hspace{0.03\textwidth}
\subfigure[two targets have the same distance to the receiver]{\begin{minipage}[c]{0.12\textwidth}
    \centering
    \includegraphics[width=\textwidth]{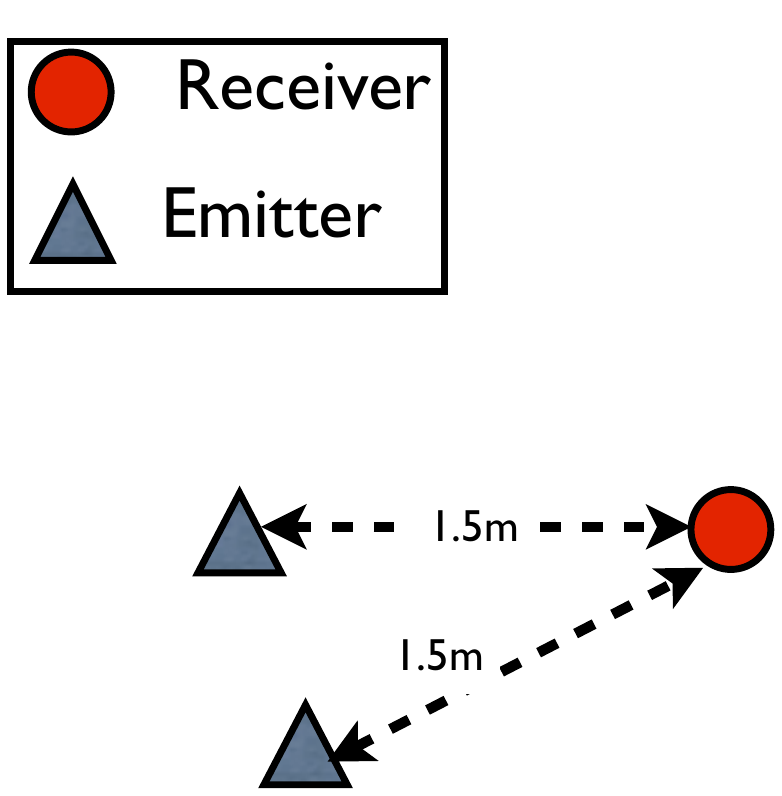}
\label{fig:2TO1OSC_OVR:a}
\end{minipage}%
\begin{minipage}[c]{0.15\textwidth}
    \centering
    \includegraphics[width=\textwidth]{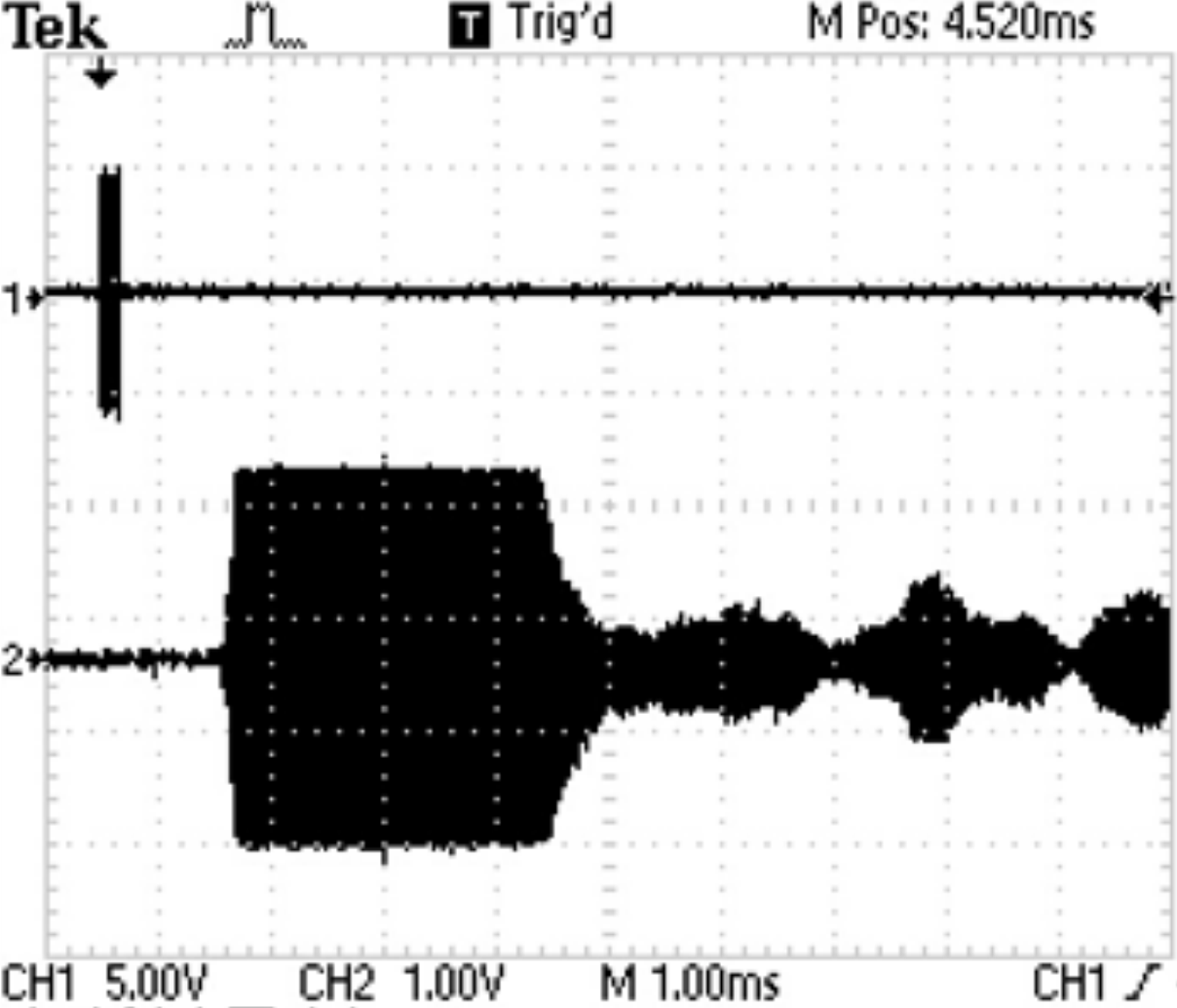}
    \label{fig:2TO1OSC_OVR:b}
\end{minipage}}
\caption{Experiments to test how the separation of NBU peaks are affected by the distances between the transmitters}
\end{figure*}

Locating by Time-of-Arrival (ToA)  of Narrow band ultrasound provides good positioning accuracy by using very low cost hardware and simple system architecture, which is widely
used in many locating system. e.g. ActiveBat\cite{Ward:1997ca}, Cricket\cite{Priyantha:2000hx}. However, when multiple targets are transmitting the same band ultrasound to a common set of receivers,  inevitable conflicts will happen at the receivers if the multiple targets are not appropriately coordinated. Further, because it is hard to code target ID into NBU, even if the NBU signals from multiple targets can be separated at a receiver, the receiver can hardly determine the source (transmitter) of the NBUs, resulting at locating ambiguity. To tackle these multiple target locating problems, existing NBU based locating systems generally rely on the exclusive working mode of the multiple targets, in which each target is assigned an exclusive time slot by TDMA or CSMA scheme to guarantee the NBU signal transmitted from one target is not conflicted to the others.

However, because the propagation speed of the ultrasound is rather slow in the air (e.g., 100 ms are needed for ultrasound propagating 34 meters), the time slot for each target's each transmission has to be long enough to avoid the transmitted NBU being conflicted with the same frequency NBUs from the other targets. Therefore, in exclusive mode\cite{AUITS}\cite{Priyantha:2000hx}\cite{Ward:1997ca}, at any time, only NBU from one target is in propagation, which results at low locating updating rate for individual target when the number of target is large. This on one hand limits the locating capacity (number of simultaneously located targets), on the other hand affects the tracking fidelity, especially when the targets are moving quickly.


To deal with these problems, in this paper, we investigated the problem of  locating multiple NBU targets in chorus mode, which is to locate a set of targets concurrently by allowing them to transmit NBU signals in the same time slot. 
In this study, we conducted not only theoretical analysis,  but also extensive simulations and hardware experiments.  
In particular, we addressed the difficulties of \emph{chorus ranging} (measuring TOAs from multiple concurrent targets) and \emph{chorus locating} (calculating locations for the multiple targets)  from following five aspects: 

1) We investigated via experiments on the conditions for a receiver to reliably separate the multiple NBUs from multiple concurrent targets.   
2) It leads to the geometric conditions on the relationship among the targets to guarantee  non-conflict multiple TOA measurements. 
 3) Since the measured TOAs lack source identity, we present consistent position generation algorithm, which exploits the historical consistence (in terms of the deviation to the historical position of the targets) to label the potential sources (transmitters) for the anonymous distances, and then to generate and to filter the potential positions via evaluating their self-consistence (in terms of the residue of location calculation).  
  4) By using the generated consistent positions as input, we proposed probabilistic particle filter algorithm to further disambiguate the trajectories of the multiple targets by using the consistence of the moving speeds and accelerations of targets as the evaluation metrics.  
 5) At last, location based transmission scheduling algorithm was proposed, which schedules the concurrent transmitters for reliable, online multi-target locating in chorus mode.

%
%
The rest of this paper is organized as follows. Related work and background are introduced in Section II. We introduced the feasibility of chorus ranging in Section III. The conditions for successful multi-target chorus locating are presented in section IV. Techniques for identifying potential sources of TOA and the particle filter algorithms for trajectory disambiguating  are presented in Section V. We proposed location-based transmitter scheduling scheme in Section VI. Simulations and experimental results are presented in Section VII. The paper is concluded with remarks in Section VIII. 

\section{Related Work and Background}
Ranging by TOA of NBU is a very attractive technique for fine-grained indoor locating due to its high accuracy, low cost, safe-to-user and user-imperceptibility. It can provide positioning accuracy in centimeter level even in 3D space, which makes it very fascinating in may indoor applications. Popular ultrasound TOA-based indoor locating system include Bat\cite{Ward:1997ca}, Cricket \cite{Priyantha:2000hx}, AUITS\cite{AUITS}, LOSNUS\cite{LOSNUS}, etc. Popular application scenario include location-based access control \cite{lock}, location based advertising delivery\cite{ubicomm09}, healthcare etc.    

Multiple target locating problem has been investigated in existing systems. When using NBU for TOA-based ranging, there is no room for coding the ID of target. Existing approaches let the target send a ultrasound-radio frequency (RF) pair. The RF signal is for synchronization and identification\cite{Ward:1997ca}\cite{Priyantha:2000hx}\cite{AUITS}.  Since there are several Media Access Control(MAC) protocol for RF signal, they can be adopted to coordinating target by just extending the length of the time-slot. 

Another approach is to explore the broadband ultrasound. Compared to the narrowband version, broadband ultrasound requires the  transducer  \cite{Hazas:2006tv} to have better frequency response performance. The broadband ultrasound wave can accommodate identity of target to support multiple targets. Furthermore, if the wave is encoded with orthogonal code\cite{Momo:2010ty}, two waves can be
decoded respectively even overlapped. But broadband locating needs high cost transducers, and the signal is more sensitive to the Doppler effects. To the best of our knowledge, very few results have been reported for locating in chorus mode, because the collision problem of NBUs are generally hard to tackle. In this paper, we investigate conditions and algorithms to resolve this challenge.

\section{Feasibility of Chorus Ranging}
At first, we introduce \emph{exclusive mode} and \emph{chorus mode} and presents experiments to investigate the conditions for a receiver to  successfully detect NBUs from concurrently transmitting targets. 

\subsection{Exclusive Mode}

In conventional approach, when there are multiple targets, to avoid conflicting of NBUs, RF+US signals from different targets are scheduled into different time slots (called  \emph{exclusive mode}). In each slot one target broadcasts RF+US signals simultaneously, where the RF signal is used to synchronize timers among the target and the receivers. Then the synchronized receivers measure the TOAs of the successive ultrasound wave from the target to estimate their distances to the target and to calculate the target's location via by least square estimation or trilateration\cite{ilssurvey}. The exclusive slot assignment can be realized by utilizing the media access control(MAC) protocols of the RF signal, e.g., CSMA, TDMA\cite{Priyantha:2000hx}\cite{Ward:1997ca}\cite{AUITS}. 

But, because the propagation speed of ultrasound is quite slow in the air (340 m per second), the time-slot for each exclusive target has to be long enough to avoid NBU conflicting to the successively arrived NBU from other targets. For an example,  Cricket\cite{Priyantha:2000hx} assigns each target nearly $100ms$ by CSMA protocol. Because $n$ targets need $n$ exclusive slots, the location updating rate of each individual target is only $O(\frac{1}{n})$, which may become unsatisfactory when there are large number of targets.  

\subsection{Chorus Mode}
In contrast to the exclusive mode, in chorus mode, we allow multiple targets to broadcast NBUs in the same time slot. A general way is to use a \emph{RF commander} to broad RF to synchronize the timers of the targets and the receivers, and let the targets to broadcast NBU signals simultaneously and concurrently with the RF. Each receiver detects the mixed ultrasound signals from the multiple targets in its communication range and tries to separate the NBU signals to estimate the TOAs from the targets, and then to determine the locations of the multiple targets. 
\subsection{Experiments on Multiple NBU signal Detection}
Detecting TOAs from concurrently transmitted NBU waves at the receiver is the critically first step for chorus ranging, which determines the feasibility of chorus ranging and locating. We conducted experiments using MTS450CA Cricket nodes \cite{Priyantha:2000hx} to investigate the conditions for successfully multiple TOA detection at  a receiver.  

Before carrying out the experiments, we made some modification to  the firmware of Cricket node. Firstly, the policy  to detect only the first arising edge was canceled, which is originally designed in Cricket to filter out the NLOS (non-line-of-sight) and the echo signals, because the NLOS and echo waves arrive later than the direct path NBU. In the new version, the received wave power is continuously compared to a threshold. When a rising edge (or wavefront) is detected,  a TOA event is reported and the comparator state is set to ``high''. When the wave power decreases to be lower than the threshold, the comparator state returns to ``low'' to be ready for detecting the next wavefront.  Secondly, we disable the CSMA protocol in target, so that the targets can send ultrasound simultaneously.

\subsubsection{Aftershock}
The first experiment used one receiver and one target.  The screen-shot on oscilloscope is shown in Fig. \ref{fig:1TO1OSC:b}.
The target send $200\mu s$ ultrasound wave. After about $4$ms, this NBU wave arrives at the receiver, which cause a $1$ms shock on the receiver's sensor. Because the ultrasound is mechanical wave, the shock on the receiver is much longer than the length of the wave sent by the target. This phenomenon is called \emph{aftershock}. When the sensor in the receiver is experiencing an aftershock, the comparator in the sensor is kept in \emph{high state},  which will block the detection of the newly arrived NBU wavefront.  In other word, aftershock will cause loss of TOA measurements at the receiver. Intuitively, the longer is the aftershock, the more frequent is the loss. From the oscilloscope output, we can also see some secondary peaks caused by the echoes. These secondary peaks can be filtered out because their powers are lower than the threshold. After the energy of the aftershock fades below the threshold, the comparator switches to \emph{low state}, which is ready for detecting the next NBU wave. 

\subsubsection{Multiple TOA Detections}Two targets and one receiver are used in the second experiment,  in which the two targets are placed at different distances from the receiver. When the two targets broadcast ultrasound signal simultaneously, the detected waves at the receiver are shown in Fig. \ref{fig:2TO1OSC_CLR:b}. 
In this case, the receiver detects two NBU wavefronts successfully (i.e., two TOAs of ultrasound), which is because the separation between the wavefronts is greater than the length of the aftershock, but note that the detected TOAs are anonymous, i.e., the receiver don't know their targets.  In the third experiments, the two targets have the same distance to the receiver, their generated waves at the receiver are overlapped, as shown in Fig. \ref{fig:2TO1OSC_OVR:a}.  In this case, only one TOA is measured at the first arising edge, which is also anonymous. 
\subsection{Condition on Detecting Multiple TOAs}
The above experiments showed clearly that  whether two successive NBU waves arrived at a receiver can be successfully detected is determined by the time separation between the two waves. If the time separation is longer than the length of the aftershock generated by the first wave, the wavefront of the second wave can be detected, otherwise, the second wavefront will be lost because the comparator is already in high state.  Since the length of the aftershock is affected by the received energy of the NBU signal detected at the receiver and by the inertia of the ultrasound transducer of the receiver, it will be better to choose ultrasound transducer with weak inertia to get shorter aftershock to improve the capability of detecting the successive ultrasound pulses.   To formulate the impact of the aftershock, let's denote $L_{max}$ as the longest possible aftershock generated by the strongest signal at the receivers. Let $v_{u}$ be the speed of the ultrasound, then 
\begin{definition}[confident separation distance]
We define $\omega=L_{max}v_{u}$ as the confident separation distance between the successively arrived waves for the receiver to successfully detect their TOAs.  
\end{definition}

From triangle inequality, it is easy to verify that if the TOAs from two concurrent targets can be successfully detected by a receiver, the distance between the two targets must be larger than $\omega$.  Let's further take the audible region of the ultrasound into consideration. We assume all the targets have the same broadcasting power, then:
\begin{definition}[audible range of ultrasound] 
We define $r$ as the audible range of the ultrasound, which is the propagation distance of the ultrasound from a target before the wave power is lower than the detectable threshold of the receivers. 
\end{definition}

By combining the separation distance and the audible range, we can arrive at the condition for a receiver to successfully detect TOAs from two concurrent targets. 
\begin{theorem}\emph{
We consider two targets $a$ and $b$ are at location $\mathbf{x}_{a}$ and $\mathbf{x}_{b}$ respectively, who send NBU waves in the same time slot, a receiver at location $\mathbf{x}_{x}$ can detect the TOAs of the two waves if:   
   \begin{equation}
\left\{ {\begin{array}{*{20}{c}}
  {\left| {{d_{a,x}} - {d_{b,x}}} \right| > \omega } \\ 
  {{d_{a,x}} \leqslant r,{d_{b,x}} \leqslant r} 
\end{array}} \right.
   \end{equation}
    \label{lemma:first}}
\end{theorem}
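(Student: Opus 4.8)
The plan is to convert the timing criterion for successive wavefront detection, established experimentally in the preceding subsection, into the two spatial conditions of the statement. Since the two targets emit synchronously at a common start time (coordinated by the RF commander), the wave from $a$ reaches the receiver after a delay $d_{a,x}/v_u$ and the wave from $b$ after $d_{b,x}/v_u$. The separation between the two arrival times is therefore $|d_{a,x}-d_{b,x}|/v_u$, and the whole argument reduces to showing that, under the hypotheses, both wavefronts fall into intervals where the comparator is in its \emph{low} state and hence free to fire.

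First I would invoke the second condition, $d_{a,x}\le r$ and $d_{b,x}\le r$, to guarantee that each wave is detectable at all. By the definition of the audible range $r$, a wave that has propagated a distance no greater than $r$ still carries power above the receiver's threshold, so each arriving wavefront is capable of tripping the comparator from low to high. Without this condition one of the waves might have decayed below the threshold and produce no TOA event whatsoever, so this step establishes the baseline that two candidate events actually exist.

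Next I would handle the blocking effect of the aftershock. Without loss of generality assume $d_{a,x}\le d_{b,x}$, so the wave from $a$ arrives first and induces an aftershock that holds the comparator high for at most $L_{max}$, the longest aftershock achievable by any admissible signal. The second wavefront, from $b$, is detectable exactly when it arrives after this aftershock has decayed, i.e. when the arrival gap exceeds $L_{max}$: $|d_{a,x}-d_{b,x}|/v_u > L_{max}$. Multiplying through by $v_u$ and using $\omega = L_{max}v_u$ from the definition of the confident separation distance turns this directly into the first condition $|d_{a,x}-d_{b,x}|>\omega$. Combining the two steps, both wavefronts arrive while the comparator is low and both are strong enough to trip it, so both TOAs are reported.

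The one point requiring care is the use of the \emph{worst-case} aftershock length $L_{max}$ rather than the actual aftershock induced by wave $a$. Because $\omega$ is defined through $L_{max}$, the separation condition is conservative: it certifies detection regardless of which target is closer and regardless of the precise received energies, which is precisely why the statement is phrased as a sufficient (``if'') condition and not as a characterization. I do not expect any genuine obstacle beyond making this worst-case bookkeeping explicit; the triangle-inequality remark preceding the statement, namely $\|\mathbf{x}_a-\mathbf{x}_b\| \geq |d_{a,x}-d_{b,x}|$, is not needed for the proof itself and only reinterprets condition one as a lower bound on the inter-target distance.
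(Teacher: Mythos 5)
Your proposal is correct and follows essentially the same route the paper takes implicitly: the paper gives no separate formal proof, but derives the theorem directly from the aftershock discussion and the definitions of $\omega = L_{max}v_u$ and the audible range $r$, exactly as you do by translating the arrival-time gap $|d_{a,x}-d_{b,x}|/v_u > L_{max}$ into the spatial condition. Your closing remark is also consistent with the paper, which uses the triangle inequality only for the separate necessary-condition observation about the inter-target distance, not for the sufficiency argument.
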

where $d_{i,j}$ calculates the distance between $\mathbf{x}_{i}$ and $\mathbf{x}_{j}$.  
\begin{figure}[htpb]
 \centering
   \includegraphics[width=.12\textwidth]{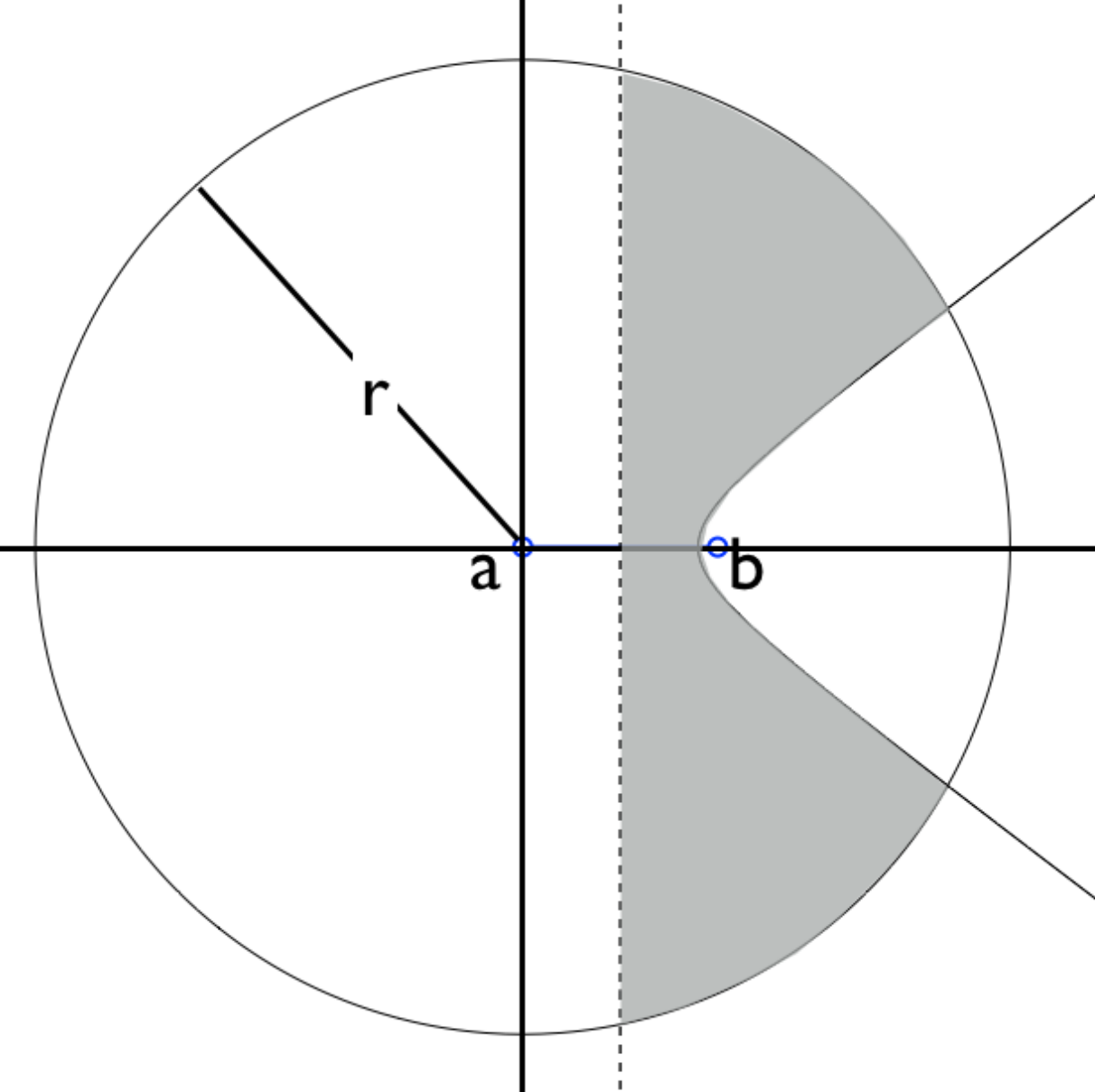}
    \caption{The blind region of $a$ caused by $b$ }
    \label{fig:BlindRegion}
\end{figure}

For the case of multiple targets, to check whether a receiver can detect TOAs from their concurrent transmissions, we can simply sort their distances to the receiver in an ascending order. When the difference between any two adjacent sorted distances are larger than $\omega$, and when the receiver is in their common audible region, the receiver can successfully detect the TOAs of their concurrent ultrasound waves. 
\subsection{``Blind Region'' Impacted by a Concurrent Target}
Based on Theorem\ref{lemma:first}, let's now consider in which region will a receiver lose the TOA from a target $a$ when another target $b$ is transmitting concurrently.  
\begin{definition}[Blind region]
Blind region of $a$ caused by $b$ is referred to the region in which the receivers cannot capture TOA from $a$, if $a$ and $b$ send wave at the same time. The area of blind region of $a$ caused by $b$ is denoted by $S^{B}_{a\leftarrow b}$. 
\end{definition}

 \begin{figure}
    \subfigure[$d_{a,b}>2r$]{
	\label{fig:mini:subfig:a} 
	\begin{minipage}[b]{0.12\textwidth}
	    \centering
	    \includegraphics[width=0.8in]{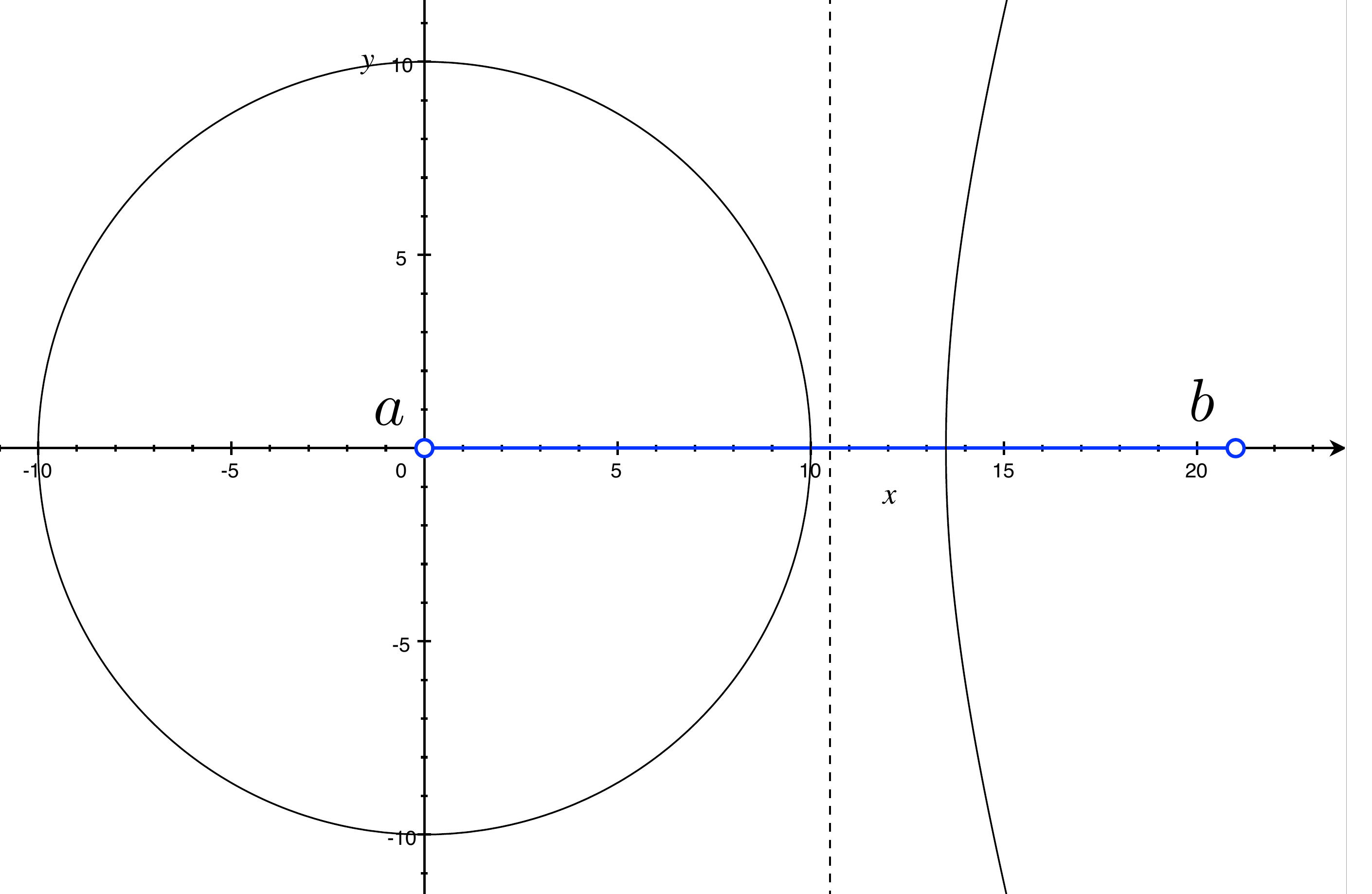}
	\end{minipage}}%
    \subfigure[$2r\!\! -\!\! \omega\!\! \le\!\!  d_{a,b}\!\! \le\!\!  2r$]{
	\label{fig:mini:subfig:b} 
	\begin{minipage}[b]{0.12\textwidth}
	    \centering
	    \includegraphics[width=0.8in]{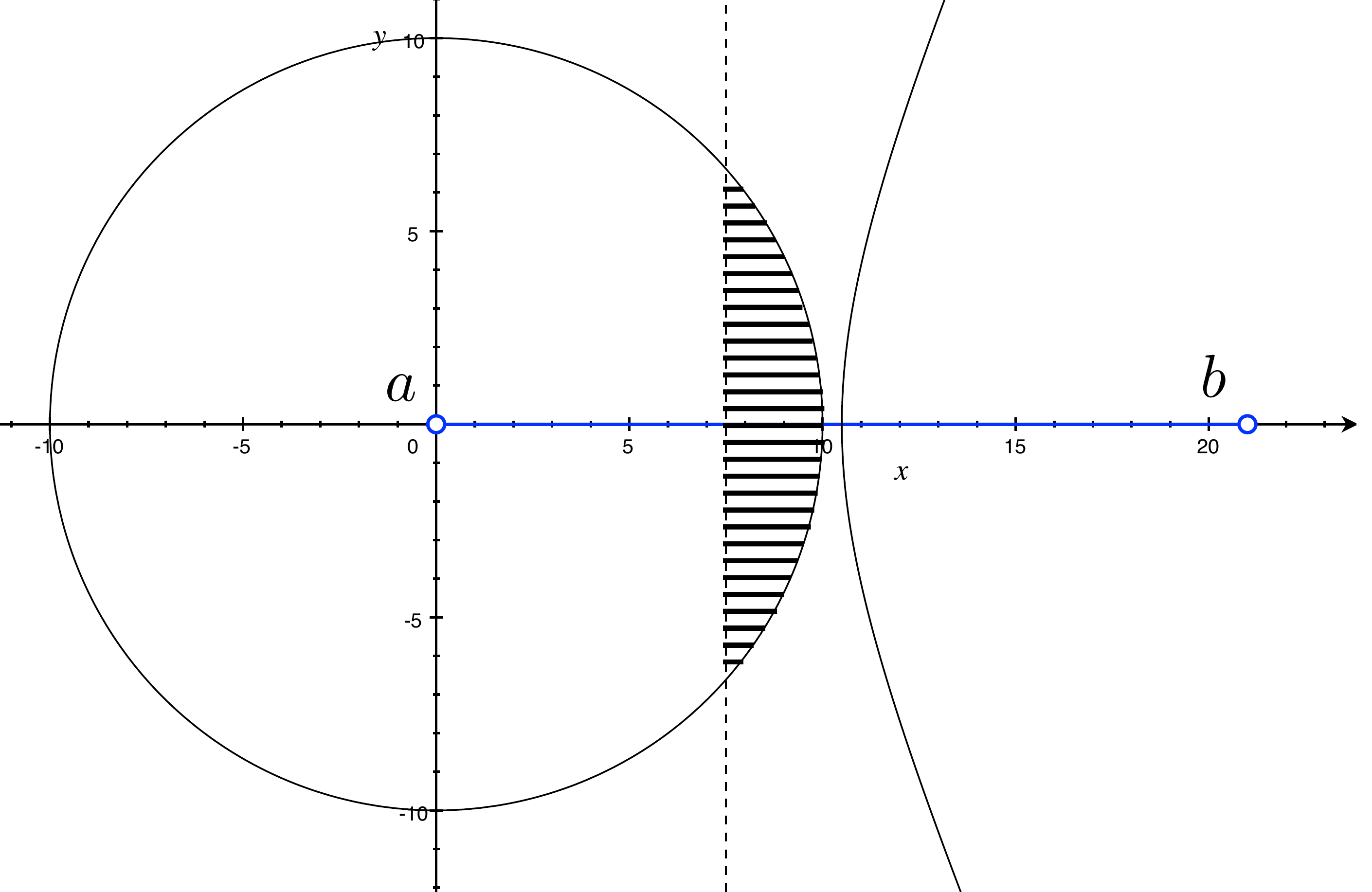}
	\end{minipage}}%
    \subfigure[$\omega\!\! <\!\!d_{a,b}\!\!< \!\!2r\!\!-\!\!\omega$]{
	\label{fig:mini:subfig:c} 
	\begin{minipage}[b]{0.12\textwidth}
	    \centering
	    \includegraphics[width=0.8in]{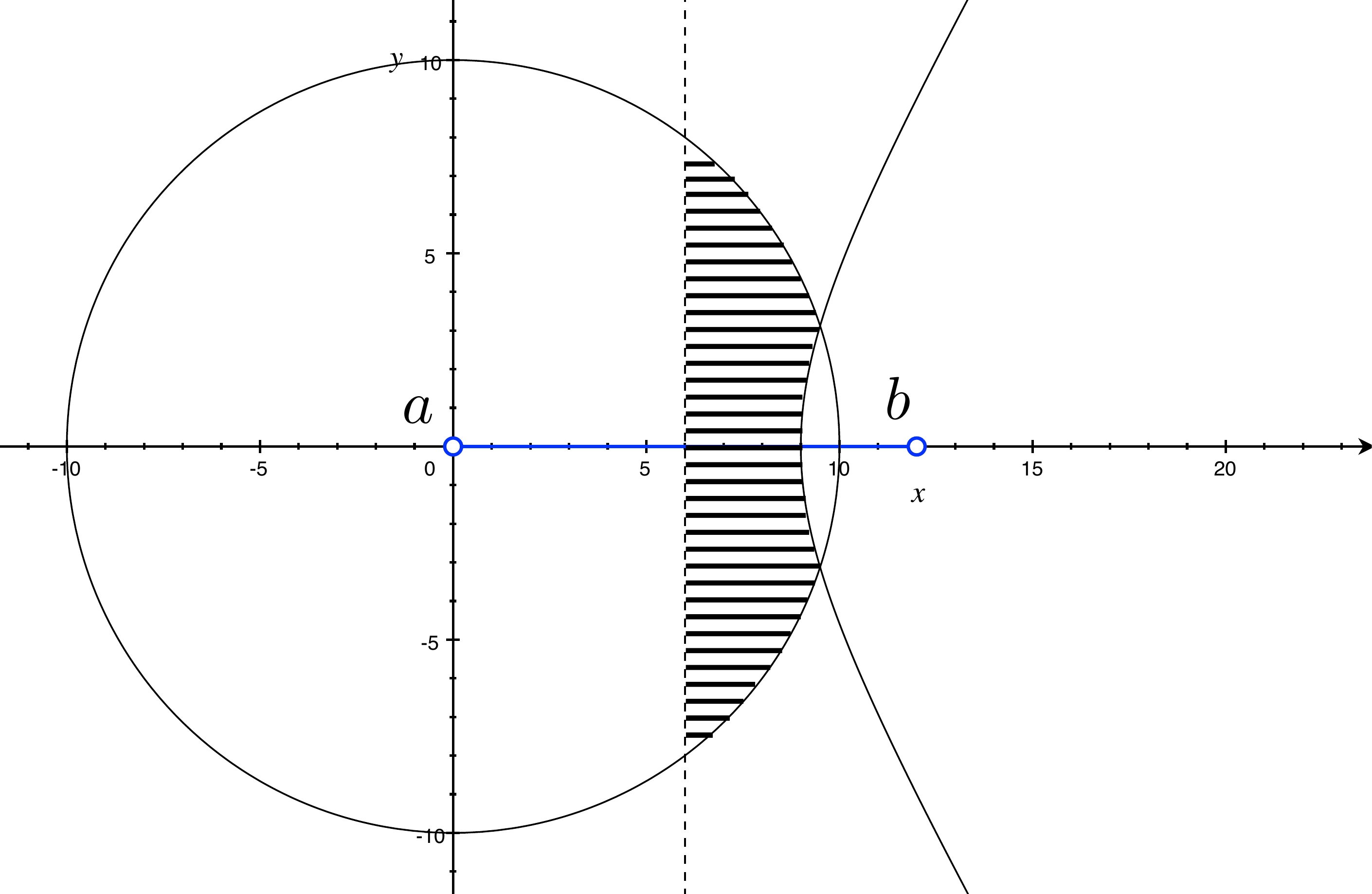}
	\end{minipage}}%
    \subfigure[$0\!\! <\!\! d_{a,b}\!\! \le\!\!  \omega $]{
	\label{fig:mini:subfig:d} 
	\begin{minipage}[b]{0.12\textwidth}
	    \centering
	    \includegraphics[width=0.8in]{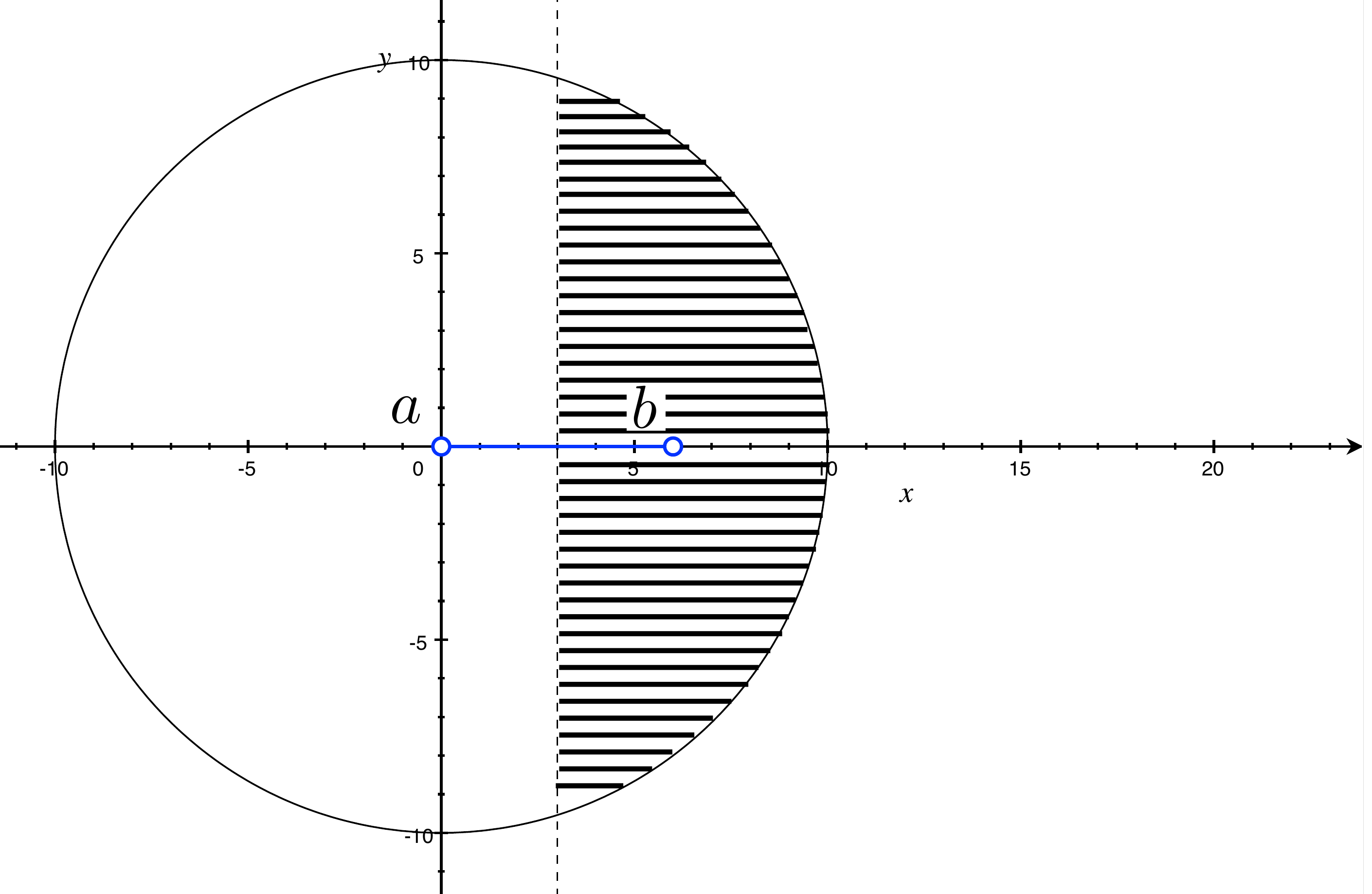}
	\end{minipage}}
    \caption{The grey region stands for blind region, where the rest part in audible-circle is
    audible region}
    \label{fig:f4} 
\end{figure}
\begin{figure*}
    \subfigure[Blind Region by 2 targets]{
	\label{fig:BObj2} 
	\begin{minipage}[b]{0.18\textwidth}
	    \centering
	    \includegraphics[width=1.2in]{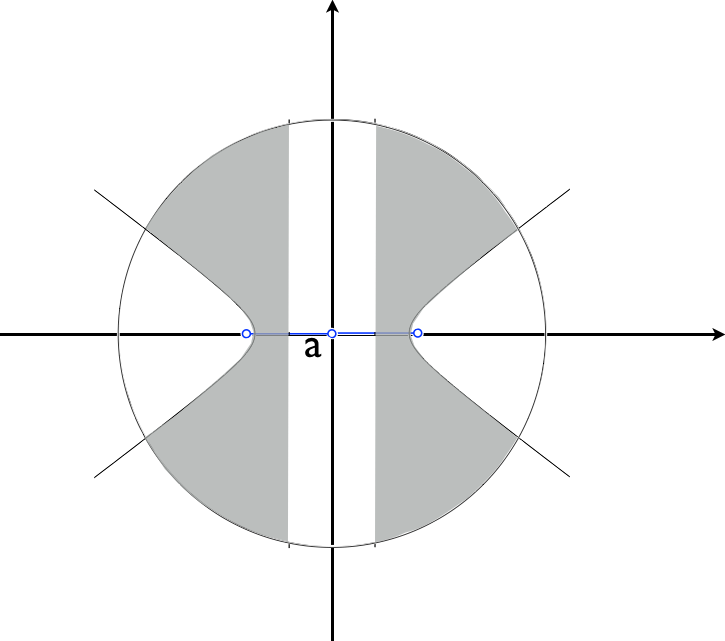}
	\end{minipage}}%
	\hspace{0.005\textwidth}
    \subfigure[by 3 targets]{
	\label{fig:BObj3} 
	\begin{minipage}[b]{0.18\textwidth}
	    \centering
	    \includegraphics[width=1.2in]{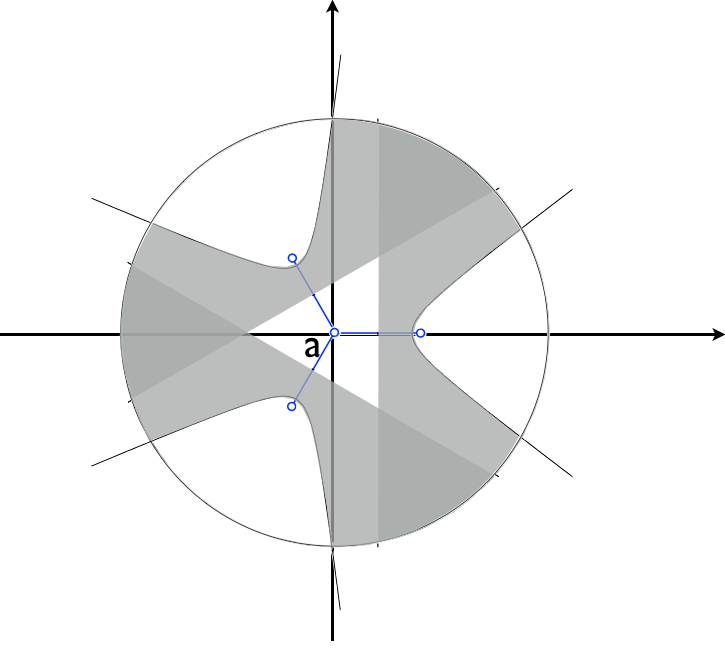}
	\end{minipage}}
	\hspace{0.005\textwidth}
    \subfigure[by 4 targets]{
	\label{fig:BObj4} 
	\begin{minipage}[b]{0.18\textwidth}
	    \centering
	    \includegraphics[width=1.2in]{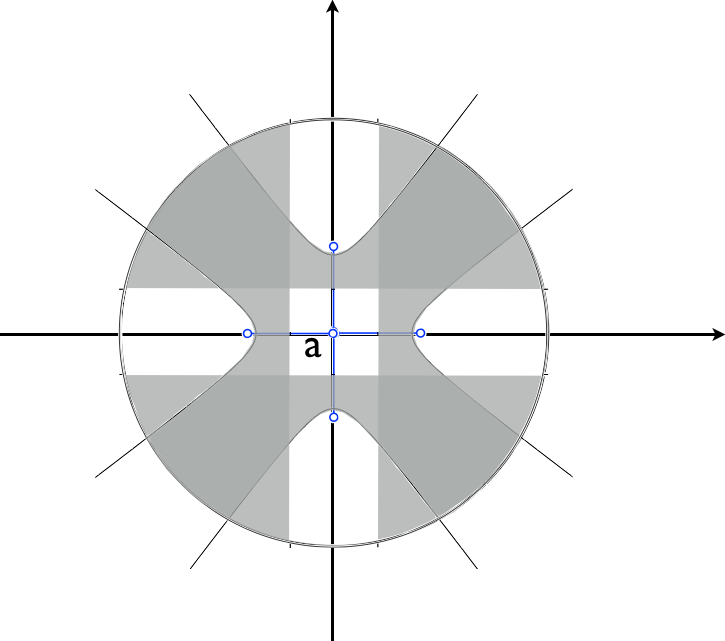}
	\end{minipage}}%
	\hspace{0.005\textwidth}
    \subfigure[by 5 targets]{
	\label{fig:BObj5} 
	\begin{minipage}[b]{0.18\textwidth}
	    \centering
	    \includegraphics[width=1.2in]{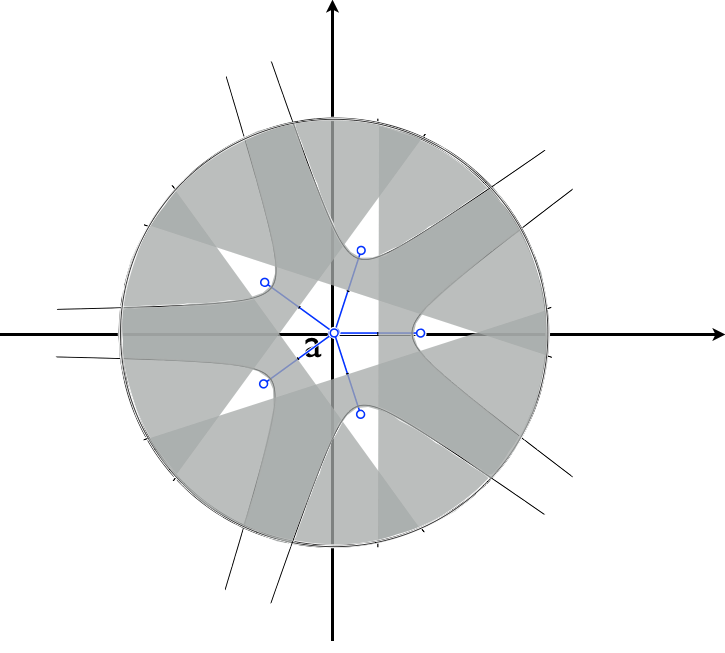}
	\end{minipage}}
	\hspace{0.005\textwidth}
    \subfigure[by 6 targets]{
	\label{fig:BObj6} 
	\begin{minipage}[b]{0.18\textwidth}
	    \centering
	    \includegraphics[width=1.2in]{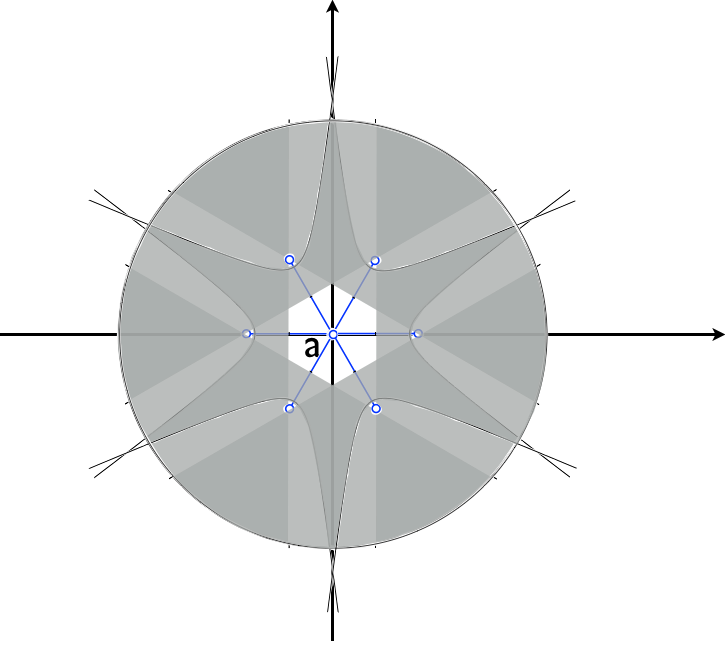}
	\end{minipage}}
    \caption{Blind-region of target $a$ caused by different number of other targets}
    \label{fig:2to6} 
\end{figure*}

In Fig.\ref{fig:BlindRegion} the blind region of $a$ caused by $b$ is shown by the gray region, which is characterized by inequality functions: $0<d_{a,x}-d_{b,x} \le \omega$ and $d_{ax}\le r$. When the receivers are locating in this region, the NBU wave from $a$ will be hidden in the aftershock of the wave from $b$, so that the receiver cannot detect TOA from $a$. Depending on the distance between $a$ and $b$, i.e., $d_{a,b}$, $S^B_{a\leftarrow b}$ changes
from 0 to $\frac{\pi r^2}{2}$. Figure \ref{fig:f4} shows how $S^B_{a\leftarrow b}$ changes with $d_{a,b}$, which indicates that $S^B_{a\leftarrow b}$ is a function of $d_{a,b}$. Moreover the area of blind-region can be expressed in close form. 
\begin{equation}
    S^B_{a\leftarrow b}(d_{a,b})\!\!=\!\!\left\{ \begin{array}{ll}
	0 & \textrm{$d_{a,b}>2r$}\\
	r^2(\theta-\sin \theta \cos \theta) & \textrm{$2r-\omega \le d_{a,b}\le 2r$}\\
	r^2(\theta-\sin \theta \cos \theta)-S_e &  \textrm{$\omega <d_{a,b}< 2r-\omega $}\\
	r^2(\theta-\sin \theta \cos \theta) &  \textrm{$0<d_{a,b}\le \omega $}\\
    \end{array} \right.
    \label{equ:blind}
\end{equation}
For clarity of expression, the detailed expansion of $S^B_{a\leftarrow b}(d_{a,b})$ can be referred in Appendix. We can just note that it is a monotone decreasing function of $d_{a,b}$. 
\section{Conditions for Chorus Locating}
Now let's consider the conditions for localizing multiple targets in the chorus mode. It is widely known that in ranging based locating algorithms such as trilateration,  three distance measurements from non-collinear beacons are necessarily required for uniquely determining the position of a target.  We therefore investigated the condition for obtaining at least three TOA measurements for a target in the chorus mode. Note that for randomly deployed receivers, the probability of three chosen receivers are collinear are small, therefore, the non-colinear constraint is not considered at this stage.    
\subsection{How Many TOA Detectable Regions Are Left?}
We define the \emph{TOA detectable region (TDR)} of a target as its audible region minus its blind region.  Fig.\ref{fig:eandb} shows the blind region caused by one concurrent target. The white region in the audible circle is the TDR region. When multiple concurrent targets are presenting, the left TDR will be further reduced. We denote the TDR of a target $a$ caused by a concurrent target set $T$ as $S^{D}_{a\leftarrow T}$. The area of $S^{D}_{a\leftarrow T}$ will affect the possible number of receivers in it for whatever distributions of the receivers, which determines the number of TOAs that can be obtained for a target. 
 \subsubsection{Consider Pairwise Separation Among Targets}
When the number of the concurrent targets  is more than 2, the blind region of the target $a$ is the union area of the blind-regions caused by all other targets in set $T$. 
\begin{equation}
    S^{B}_{a\leftarrow T}=\cup_{s\in T} S^{B}_{a\leftarrow s}
    \label{equ:BlindUnion}
\end{equation}
As indicated in (\ref{equ:blind}), $S^{B}_{a\leftarrow b}$ is a monotone decreasing function of $d_{a,b}$, therefore, an intuition is that the less are the pair-wise distances among the targets, the larger is the blind region cased by each target.  Therefore, we consider $S^{B}_{a\leftarrow T}$ when the pair-wise distances among all concurrent targets are the same, denoted by $d$. Via such a way, we characterize how the inter distances among the targets and their distributions affect the blind region of a particular target. 

 \subsubsection{Lower Bound of $S^{D}_{a\leftarrow T}$ in Multiple Target Case}
When all targets have the same pair-wise distance $d$, because the isotropous feature of the audible circle of $a$, the blind region caused by each individual target has the same shape and the same size. By inclusion-exclusion principle, the union area of these blind regions is the largest when the intersection area of the blind regions is the smallest. This case appears when the other targets are geometrically symmetrically distributed around $a$. Fig.\ref{fig:BObj6} shows the largest union area of the blind regions of $a$ when $\left| T \right| =2,3,4,5,6$ respectively. The corresponding TDR area is the lower bound of $S^{D}_{a\leftarrow T}$ for pairwise separation distance $\ge d$ and when the number of concurrent targets in the audible region is known. We omit the expressions of these lower bounds for space limitation.

More generally, when there are unknown number of targets are presenting,  we can also derive a lower bound of  $S^{D}_{a\leftarrow T}$ for given $d$. It is the area of the inscribed circle centered at $a$ with radius $d/2$ in the TDR  as shown in Fig. \ref{fig:eandb}. Therefore, the lower bound of $S^{D}_{a\leftarrow T}$ for given pairwise separation $d$ is:
\begin{equation}
S^{D}_{a\leftarrow T} \ge \pi \left(\frac{d}{2}\right)^{2}
\end{equation}
It is a monotone increasing function of $d$, which means that the larger is the pair-wise separation among the targets, the larger is the area of TDR for each target. 


%
\begin{figure}[htpb]
    \begin{center}
	\includegraphics[width=.20\textwidth]{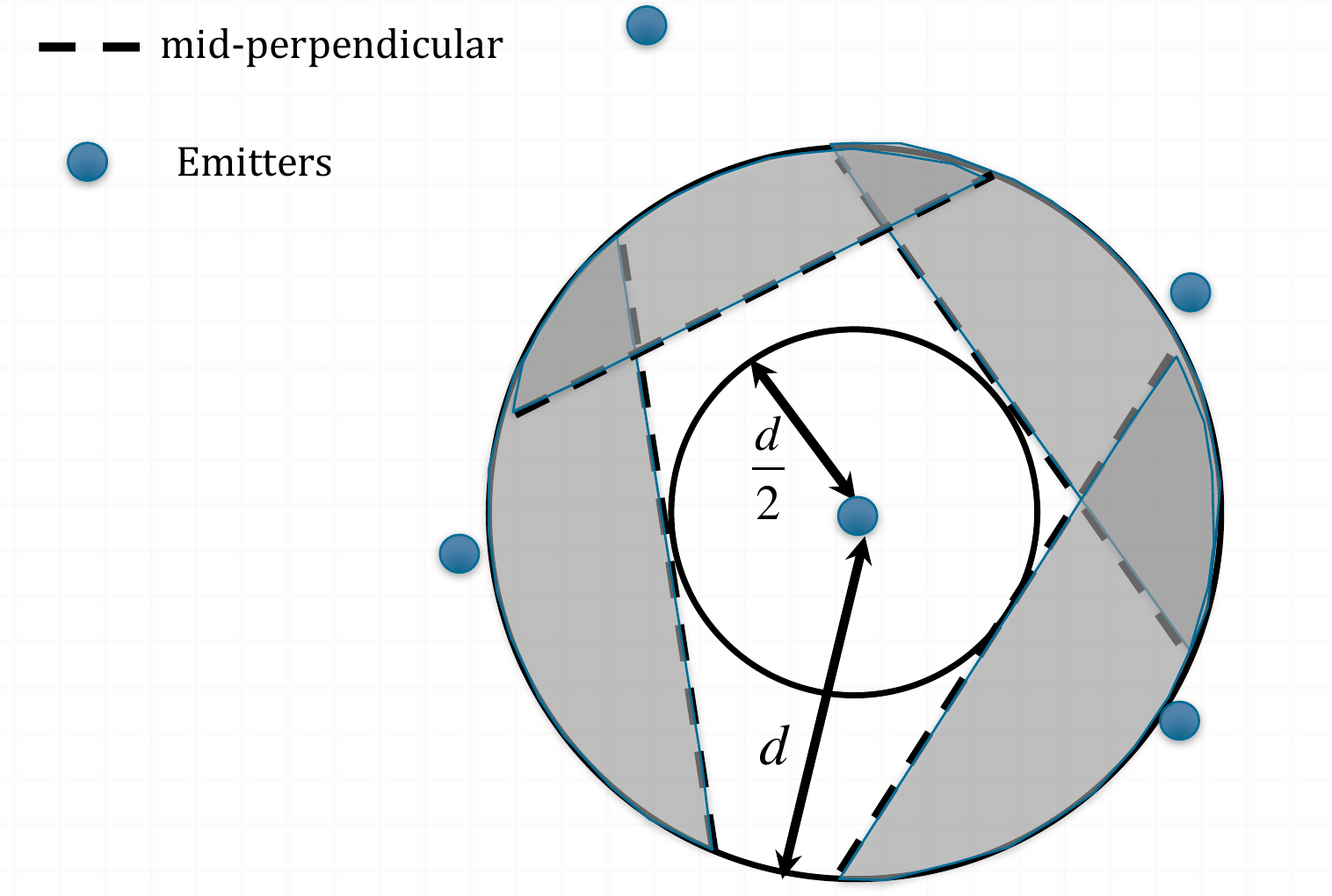}
    \end{center}
    \caption{Lower bound of blind-region}
    \label{fig:eandb}
\end{figure}

\subsection{Probability of Having At Least Three Receivers in TDR}
Based on the lower bound of $S^{D}_{a\leftarrow T}$, for any given distribution of the receivers, we can evaluate the probability and the expectation of at least three receivers in the TDR region of $a$. Note that different formulas can be utilized to estimate the lower bound of $S^{D}_{a\leftarrow T}$ if we know the number of concurrent targets in the audible region and the minimum separation distance $d$. 

Let's consider a general case when the receivers are in Poisson distribution, i.e.$P(n_{r}=k)=\frac{\lambda^{k}e^{-\lambda}}{k!}$, where $\lambda$ is the expected number of receivers in a unit area (e.g. 1 $m^{2}$). By substituting the lower bound of $S^{D}_{a\leftarrow T}\ge \pi \left(\frac{d}{2}\right)^{2}$,  the probability of at least three receivers are in $S^{D}_{a\leftarrow T}$ can be calculated as:
\begin{equation}
1 - \sum\limits_{i = 0}^2 {p({n_r} = i) \ge } 1 - {e^{ - \frac{{\lambda \pi {d^2}}}{2}}}\left[ {1 + \frac{{\lambda \pi {d^2}}}{2} + \frac{{{\lambda ^2}{\pi ^2}{d^4}}}{8}} \right]
\end{equation} 
\begin{theorem}
\emph{When receivers are in Poisson distribution with $\lambda$ expected receivers in a unit area, when the pair-wise separation among targets are larger than $d$, the probability of at least three receivers are presenting in the TDR of a target is lower bounded by
\begin{equation}
1 - {e^{ - \frac{{\lambda \pi {d^2}}}{2}}}\left[ {1 + \frac{{\lambda \pi {d^2}}}{2} + \frac{{{\lambda ^2}{\pi ^2}{d^4}}}{8}} \right].
\label{equ:prolb}
\end{equation}
}
\end{theorem}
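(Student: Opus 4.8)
The plan is to convert the deterministic geometric lower bound on the detectable area, $S^{D}_{a\leftarrow T}\ge \pi(d/2)^{2}$ established above, into a lower bound on the probability that the TOA detectable region of target $a$ contains at least three receivers. The bridge from ``area'' to ``probability'' is the counting property of the Poisson field together with a monotonicity argument: the tail probability of a Poisson count is a nondecreasing function of its mean, and that mean is proportional to the region's area, so replacing the area by its guaranteed lower bound can only decrease the probability and therefore yields a valid lower bound.

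First I would make the counting property precise. The hypothesis posits a homogeneous Poisson field with $\lambda$ expected receivers per unit area. By the restriction (spatial-independence) property of such a field, the number $n_{r}$ of receivers falling inside any fixed measurable region of area $A$ is itself Poisson distributed with mean $\mu=\lambda A$; the unit-area formula $P(n_{r}=k)=\lambda^{k}e^{-\lambda}/k!$ quoted above is merely the special case $A=1$. Applying this to the TDR of $a$, whose area is $S^{D}_{a\leftarrow T}$, gives $\mu=\lambda\,S^{D}_{a\leftarrow T}$ and
\begin{equation}
P(n_{r}\ge 3)=1-\sum_{i=0}^{2}\frac{\mu^{i}e^{-\mu}}{i!}=1-e^{-\mu}\Bigl(1+\mu+\tfrac{\mu^{2}}{2}\Bigr).
\end{equation}

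Next I would establish the monotonicity that makes the substitution legitimate. Writing $f(\mu)=e^{-\mu}\bigl(1+\mu+\mu^{2}/2\bigr)$ for the cumulative mass on $\{0,1,2\}$, a one-line differentiation gives $f'(\mu)=-e^{-\mu}\mu^{2}/2\le 0$, so $f$ is nonincreasing and hence $P(n_{r}\ge 3)=1-f(\mu)$ is nondecreasing in the mean $\mu$. Since $\mu=\lambda\,S^{D}_{a\leftarrow T}$ and the pairwise separation bound guarantees $S^{D}_{a\leftarrow T}\ge \pi(d/2)^{2}$, the mean is bounded below accordingly; substituting this smaller value of $\mu$ into $1-f(\mu)$ can only decrease it, and so produces the advertised lower bound, i.e.\ the expression in~(\ref{equ:prolb}). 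This closes the argument.

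The main obstacle is precisely this monotonicity step: a priori one knows only that the \emph{area} is at least $\pi(d/2)^{2}$, and it is not automatic that plugging a smaller mean into the exact tail formula yields a genuine lower bound rather than an approximation of unknown sign. The derivative computation for $f$ is what certifies that $1-f(\mu)$ is monotone in $\mu$, so that the geometric lower bound on $S^{D}_{a\leftarrow T}$ transfers cleanly to a probabilistic lower bound on $P(n_{r}\ge 3)$. A secondary point worth stating explicitly is the area-scaling of the Poisson intensity (a count over a region of area $A$ is Poisson with mean $\lambda A$), since the hypothesis is phrased only for unit area; once these two facts are in place, everything else is routine bookkeeping.
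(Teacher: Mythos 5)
Your argument follows the paper's own derivation: the paper's entire ``proof'' is the single displayed inequality obtained by treating the receiver count in the TDR as Poisson with mean $\lambda S^{D}_{a\leftarrow T}$ and substituting the area lower bound $\pi(d/2)^{2}$. Your explicit verification that $f(\mu)=e^{-\mu}\bigl(1+\mu+\mu^{2}/2\bigr)$ satisfies $f'(\mu)=-e^{-\mu}\mu^{2}/2\le 0$, so that the tail probability is monotone in the mean and the area bound legitimately transfers to a probability bound, is a step the paper leaves entirely implicit; it is the right thing to make precise and it is correct.

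However, your final sentence --- that substituting $\mu=\lambda\,\pi(d/2)^{2}$ ``produces the advertised lower bound, i.e.\ the expression in~(\ref{equ:prolb})'' --- does not hold arithmetically. With $\mu=\lambda\pi(d/2)^{2}=\lambda\pi d^{2}/4$ the method yields
$1-e^{-\lambda\pi d^{2}/4}\bigl(1+\tfrac{\lambda\pi d^{2}}{4}+\tfrac{\lambda^{2}\pi^{2}d^{4}}{32}\bigr)$,
whereas the expression in~(\ref{equ:prolb}) corresponds to $\mu=\lambda\pi d^{2}/2$, i.e.\ to an area of $\pi d^{2}/2$ rather than $\pi d^{2}/4$. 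Since $1-f(\mu)$ is increasing in $\mu$, the stated formula is \emph{larger} than what the inscribed-circle bound actually certifies, so it is not a valid lower bound under the premises as given. This factor-of-two slip originates in the paper itself (its displayed inequality commits the same substitution error), so you have faithfully reproduced the intended argument; but a complete proof should either correct the constants in~(\ref{equ:prolb}) or strengthen the geometric lemma to guarantee a TDR area of at least $\pi d^{2}/2$.
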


Fig.\ref{fig:prob} plots the lower bound of $P(n_{r}\ge 3)$ as a function of $d$ and $\lambda$. We can see that for given $\lambda$, the lower bound of at least three receivers presenting in the TDR of a target increases exponentially with $d$. Note that the figure plots only the lower bound. Because the real TDR area can be much larger than the lower bound area of TDR, in real case, the probability of three receivers are in the TDR of a target can be much closer to 1. 
\begin{figure}[htpb]
    \begin{center}
	\includegraphics[width=.25\textwidth]{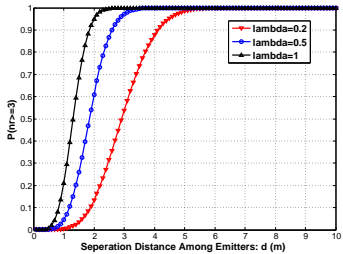}
    \end{center}
    \caption{The lower bound of the probability  of at least three receivers are in the TDR of a target as a function of $d$ and $\lambda$}
    \label{fig:prob}
\end{figure}
The results in Fig.\ref{fig:prob} show the strong feasibility of chorus locating. It only needs the targets are well separated and the receivers have enough density for receivers to obtain at least three TOA-based distances for each concurrent target. 
\begin{figure*}[htpb]
    \begin{center}
	\includegraphics[width=0.9\textwidth]{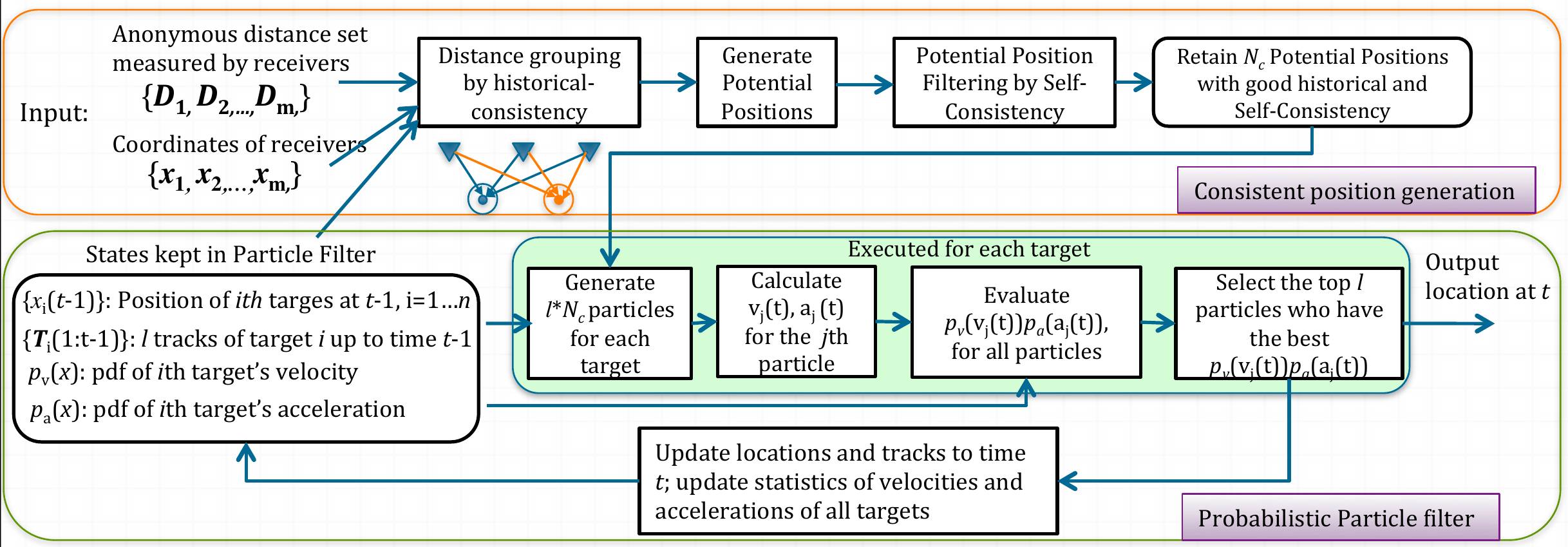}
    \end{center}
    \caption{The diagram of consistent location generation and probabilistic particle filter algorithms to utilize the anonymous distances measured by receivers to locate and disambiguate the tracks of multiple targets}
    \label{fig:algorithm}
\end{figure*}
\section{Locate Multiple Targets by Anonymous Distances}

Above analysis shows the feasibility of detecting multiple anonymous TOAs at a receiver in chorus mode. 
But the receiver don't know the source (target) of each TOA.  To utilize these TOAs to locate the multiple targets, we developed methods to effectively utilize the anonymous distances to locate the multiple targets and to disambiguate their trajectories. We introduce the proposed algorithms in this Section.  
\subsection{Overview}
The overview of the proposed techniques are shown in Fig.\ref{fig:algorithm}, which contain mainly two parts: 1) consistent position generation and 2) probabilistic particle filter for trajectory disaggregation. In the first part, the inputs are the set of anonymous distances measured by the receivers, denoted by $[\mathbf{D}_{1},\cdots,\mathbf{D}_{m}]$, and the coordinates of these receivers, denoted by $[\mathbf{x}_{1},\cdots,\mathbf{x}_{m}], $where $m$ is the number of receivers. The number of distances measured by the $i$th receiver is $\left| {{\mathbf{D}_i}} \right| = {k_i}$. 
\subsubsection{Overview of Consistent Position Generation}
Since each three distances from non-collinear receivers can generate a position estimation, enumerating the combinations of these anonymous distances will generate a large amount of possible positions, in which most of the positions are wrong. To avoid the pain of finding needless from the sea of large amount of potential positions, we proposed to firstly find the feasible distance groups by historical-consistency, i.e., by utilizing the consistency of distance measurements with the latest location estimations of the targets (which are provided by the particle filter). After this step, the distance groups are utilized to generate a set of potential positions. To further narrow down the potential position set, we proposed self-consistency to evaluate the residue of location calculation of each potential position. Only the top $N_{c}$ potential locations with good self-consistency will retained to be used as input to the particle filter at time $t$. 
\subsubsection{Overview of Probabilistic Particle Filter} 
The particle filter maintains the positions of $n$ targets at time $t-1$, denoted by $\{\mathbf{x}_{i}(t-1)\}$; maintains $l$ most possible tracks for each target up to time $t-1$, denoted by $\{\mathbf{T}_{i}(1:t-1)\in R^{l * (t-1)}\}$; the probability distribution function (pdf) of each target's velocity, denoted by $p_{v}(x)$; and the probability distribution function of each target's acceleration, denoted by $p_{a}(x)$. 
Then at time $t$, for each target $i$, by connecting its $l$ tracks at time $t-1$ to $N_{c}$ potential positions at time $t$,  $lN_{c}$ particles are generated. The velocity ($v_{j}(t), j=1, \cdots, l*Nc$) and acceleration ($a_{j}(t), j=1, \cdots, l*Nc$) of each particle are calculated, based on which,  the likelihood of the particle $j$ is evaluated by $p_{v}(v_{j}(t))p_{a}(a_{j}(t))$. Then by ranking the likelihoods of the particles, $l$ top particles will be retained for target $i$ at time $t$, which are used to update the location estimation of target $i$ at time $t$, the historical tracks and the pdfs of velocity and acceleration.  We introduce key points of the algorithm in following subsections. 
\subsection{Consistent Potential Position Generation }
\subsubsection{Historical Consistency}To avoid generating a large amount of misleading potential positions by blind combinations of the anonymous distances,  we proposed to measure the \emph{historical consistency} of the distances to label the distances to reasonable sources . The input of this step is the historical positions of the $n$ targets provided by particle filter and the distance set from the receivers.  For a target, since the velocity of the target is upper-bounded in the real scenarios, which is denoted by $v_e$, its position at time $t$ will be bounded inside a disk centered at its position at $t-1$, with radius $v_{e}$, i.e., 
\begin{equation}
    ||\mathbf{x}_{i}(t)-\mathbf{x}_{i}(t-1)||\le v_e 
    \label{equ:byi}
\end{equation}
For a receiver $j$, let $d_{{j}, i^{(t-1)}}$ represent the distance from it to $\mathbf{x}_{i}(t-1)$. From triangular inequality, for every distance $D_{k}$ measured by receiver $j$ at time $t$, $D_{k}$'s potential source is labeled to target $i$ if:
\begin{equation}
|D_{k}-d_{{j},i^{(t-1)}}|\le v_{e}
\end{equation}  
Then, only the distances with the same source (target) label will be selected to generate potential positions for the targets using trilateration. This step on one hand reduces the computation cost of generating massive possible positions, on the other hand avoids generating the obviously wrong positions. 
\subsubsection{Self-Consistency} We further evaluate the self-consistency of the generated potential positions to further filter out the unreasonable position candidates.  Considering a potential position $\mathbf{x}$ calculated  by trilateration using $m$ distances $[D_{1}, \cdots, D_{m}]$ from receivers at location $\mathbf{x}_{r_{1}}, \cdots, \mathbf{x}_{r_{m}}$, the self-consistency of this location is measured by the residue of  the location calculation:
\begin{equation}
S_{x}=\frac{1}{m}\sum_{i=1}^{m}\left(D_{i}-d_{x,r_{i}}\right)^{2}  
\end{equation}
where $d_{x,r_{i}}$ is the distance from $\mathbf{x}$ to receiver $\mathbf{x}_{r_{i}}$. Then only top $N_{c}$ potential positions with the best self-consistency performances will be retained as the input for particle filter to be further processed by particle filter at time $t$.

\subsection{Probabilistic Particle Filter}
The particle filter maintains 1) the locations of $n$ targets at $t-1$; 2) $l$ most possible tracks for each target up to time $t-1$, and 3) the probability density functions (pdfs) of each target's velocity and acceleration. The pdfs of each target's velocity and acceleration are calculated based on historically velocity and acceleration up to $t-1$.  They are utilized to evaluate the likelihood of the generated particles. 
\subsubsection{Generate and Evaluate Particles} 
For each target, say $i$, by connecting its $l$ ending locations at $t-1$ (in its $l$ tracks) to the $N_{c}$ potential positions at time $t$, $l*N_{c}$ particles are generated, each particle represents a potential track. Then we evaluate the likelihood of each particle $k, k=1,\cdots, l*N_{c}$ by the following likelihood function:
\begin{equation}
c_{k}=p_{v}(v_{k}(t))p_{a}(a_{k}(t))
\end{equation}
where $v_{k}(t)$ and $a_{k}(t)$ are calculated on the particle $k$ by:
\begin{equation}
	v_{k}(t)=|\mathbf{x}_{k}(t)-\mathbf{x}_{k}(t-1)|, a_{k}(t)=v_{k}(t)-v_{k}(t-1)
    \label{equ:VandA}
\end{equation}
Then the top $l$ particles with best likelihood will be retained for the target for the next step, and $\mathbf{x}(t)$ in the most possible particle will be output as the position estimation at time $t$. The pdfs of velocity and acceleration are updated accordingly. Such a progress will be applied to all the targets, and the algorithm of the probabilistic particle filter is listed in Algorithm \ref{algo:UpdateParticle}.
\begin{figure}[t]
    \begin{center}
	\includegraphics[width=.4\textwidth]{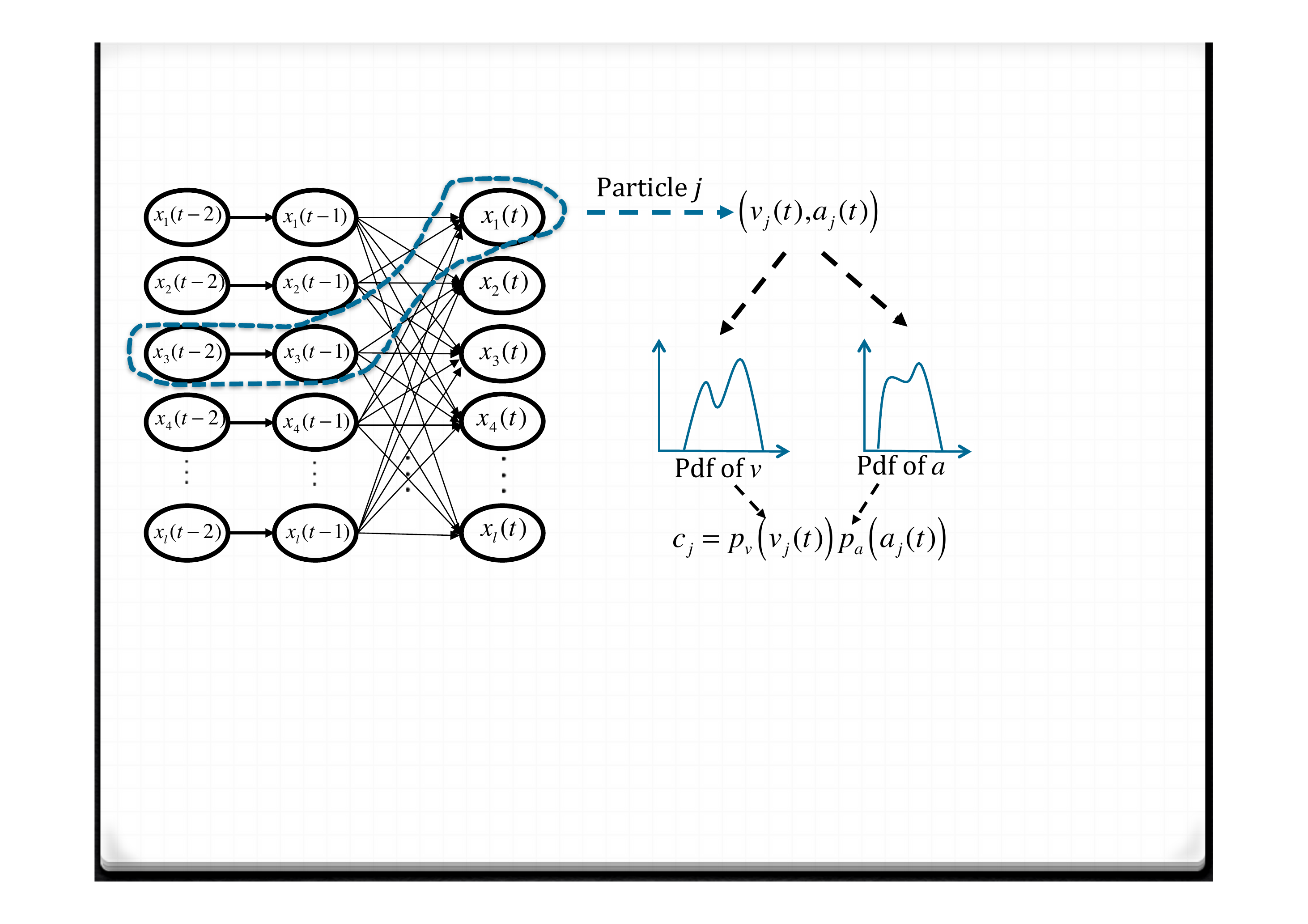}
    \end{center}
    \caption{Evaluate the cost of each generated particle}
    \label{fig:ParticleFilter}
\end{figure}

\begin{algorithm}[b]
    \caption{Probability Particle Filter for a Target $i$}
    \begin{algorithmic}[1]
	\REQUIRE $\mathbf{T}_{i}(\!1\!:\!t-1\!)$ , possible 
	location $\{\mathbf{x}_1,$ $\mathbf{x}_2,\dots,\mathbf{x}_{n_c}\}$. PDF of velocity
	$p_v(\cdot)$ and PDF of acceleration $p_a(\cdot)$.  
	\ENSURE Updated $\mathbf{T}_{i}(1:t)$, $p_v(\cdot)$ and $p_a(\cdot)$, $\mathbf{x}_{i}(t).$ 
	\STATE $\{p_1,\dots,p_{l\times n_c}\}\leftarrow\mathbf{T}_{i}(\!1\!:\!t-1\!)\times \{\mathbf{x}_1,\dots,\mathbf{x}_{n_c}\}$ // Generate particles by posible locations of tracks at $t-1$ 
	\STATE $\{c_1,\dots,c_{l\times n_c}\}\leftarrow \mathbf{0}$
	\FOR {$i=1 : l\times n_c$ }
	\STATE $v_{k}(t)=|\mathbf{x}_{k}(t)-\mathbf{x}_{k}(t-1)|$
	\STATE $ a_{k}(t)=v_{k}(t)-v_{k}(t-1)$
	\STATE $c_k=p_v(v_{k}(t))\cdot p_a(a_{k}(t))$
	\ENDFOR
	\STATE $\{\hat{p}_1,\dots,\hat{p}_{l\times n_c}\} \leftarrow$ sorting $\{p_1,\dots,p_{l\times n_c}\}$ by $\{c_1,\dots,c_{l\times n_c}\}$ in ascending order
	\STATE $\mathbf{T}_{i}(1:t)\leftarrow \{\hat{p}_1,\dots,\hat{p}_{l}\}$  //
	preserve the first $l$ sorted particle  
	\STATE $p_a(\cdot)\leftarrow UpdatePDF(p_a(\cdot),\{v_1(t),\dots,v_l(t)\})$
	\STATE $p_v(\cdot)\leftarrow UpdatePDF(p_v(\cdot),\{a_1(t),\dots,a_l(t)\})$
	\STATE $\mathbf{x}_{i}(t)=\hat{p}_1$
    \end{algorithmic}
    \label{algo:UpdateParticle}
\end{algorithm}

Complexity of Algorithm\ref{algo:UpdateParticle} can be easily verified.
\begin{lemma}
    Complexity of algorithm \ref{algo:UpdateParticle} is $O(n N_cl\log(N_cl) )$
    \label{lemma:pt2}
\end{lemma}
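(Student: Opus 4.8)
The plan is to bound the running time of Algorithm~\ref{algo:UpdateParticle} for a single target and then multiply by the number of targets $n$, since the filter is applied independently to each of the $n$ targets. The key quantity is the number of particles generated in Line~1, which is the size of the Cartesian product of the $l$ surviving tracks at time $t-1$ with the $N_c$ candidate positions at time $t$, namely $N_c l$. Every subsequent step is then measured against this quantity.

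First I would account for the cheap steps. Line~1 constructs $N_c l$ particles, each obtained by appending one candidate position to the end of one track, which is $O(1)$ per particle and hence $O(N_c l)$ in total; Line~2 initialises the $N_c l$ cost entries in $O(N_c l)$. The loop on Lines~3--7 executes $N_c l$ iterations, and each iteration performs a constant amount of work: computing $v_k(t)$ and $a_k(t)$ from two stored coordinates by~(\ref{equ:VandA}), followed by two pointwise evaluations of the stored densities $p_v$ and $p_a$ and one multiplication to form $c_k$. Treating a single density evaluation as $O(1)$, the loop costs $O(N_c l)$.

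Next I would identify the dominant step. Line~8 sorts all $N_c l$ particles by their likelihoods, which by any comparison-based sort is $O\bigl(N_c l \log(N_c l)\bigr)$ and dominates every linear term above. The remaining steps are sub-dominant: Lines~9 and~12 copy and output the top $l$ particles in $O(l)$, and the two density updates on Lines~10--11 via $UpdatePDF$ refit each distribution from the $l$ retained samples. Summing, the per-target cost is $O\bigl(N_c l \log(N_c l)\bigr)$, and running the filter over all $n$ targets yields the claimed $O\bigl(n N_c l \log(N_c l)\bigr)$.

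The main obstacle is not any single calculation but justifying the $O(1)$ primitive-cost assumptions on which the bound rests. Specifically, I must argue that evaluating the velocity and acceleration densities at a point, and re-estimating them via $UpdatePDF$, are each bounded independently of $N_c l$; this is immediate if $p_v$ and $p_a$ are maintained as fixed-resolution histograms or parametric forms, so that $UpdatePDF$ touches only the $l$ newly retained samples and runs in $O(l)$, which is absorbed by the sorting term. Under these standard assumptions the sort is provably the bottleneck and the stated complexity follows.
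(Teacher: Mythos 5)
Your proof is correct and follows essentially the same route as the paper's: identify the comparison sort of the $N_c l$ particles on Line~8 as the dominant per-target cost, giving $O(N_c l \log(N_c l))$, and multiply by the $n$ targets. Your version is simply more explicit than the paper's one-line argument, in particular in spelling out the $O(1)$ assumptions on density evaluation and on \emph{UpdatePDF} that the paper leaves implicit.
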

\begin{proof}
For each target, the most expensive step is to sort the $l*N_{c}$ elements, which takes $O(N_cl\log(N_cl)))$, so the overall complexity for locating the $n$ targets is $O(n N_cl\log(N_cl)))$.
\end{proof}

The probabilistic particle filter provides good flexibility. 1) It supports the trade off between the locating accuracy and the executing time by changing the number of the preserved particles. 2) The likelihood of each Particle is calculated by considering both the velocity and the acceleration, which is online continuously updated, so that it can be suitable even when the targets have variated motion characters. 

A potential drawback of this particle filter approach is that a target may be lost when it is too close to other targets. When the  location candidates of two targets are almost the same, all particles may follow one target and none particle
follows the other. Although such kind of target lost happens only in chance, it affects the tracking performance occasionally. We show this problem can be well solved by location based time-slot scheduling, which is discussed in the next section.  
\section{Location based Time-slot Scheduling}
Locating in chorus mode requires concurrent targets have enough pair-wise separation distances, otherwise the receivers cannot detect TOAs from their concurrent waves. Keeping concurrent targets to be spatially well separated is also important for the particle filter to confidentially disambiguate their tracks. In addition, the initial condition of the particle filter needs the initial location estimations to be as accurate as possible to avoid cascading errors. 

With consideration of these requirements, we designed location based time-slot assignment (LBTA) to appropriately schedule the concurrent transmissions of the targets. In general,  LBTA assigns targets which are close to others or with unknown locations to work in exclusive time-slots to avoid conflict. Targets satisfying the separation distance are scheduled to transmit concurrently. 

At first in LBTA, a confident separation distance $d_{s}$ is calculated  by the lower bound of TDR region (\ref{equ:prolb}) based on given density of the receivers, i.e., $\lambda$ to guarantee $P(n_{r}\ge 3)$ approaching 1. Then the targets with known locations will be separated into a set of \emph{$d_{s}$-separated groups}. Each group consists of several targets with the pair-wise distance among targets in the group is at least $d_{s}$.  Then an exclusive time slot is assigned to the targets in the same \emph{$d_{s}$-separated group}. Exclusive slots are also assigned to the targets with unknown locations.  

\begin{figure}[h]
    \begin{center}
	\includegraphics[width=.3\textwidth]{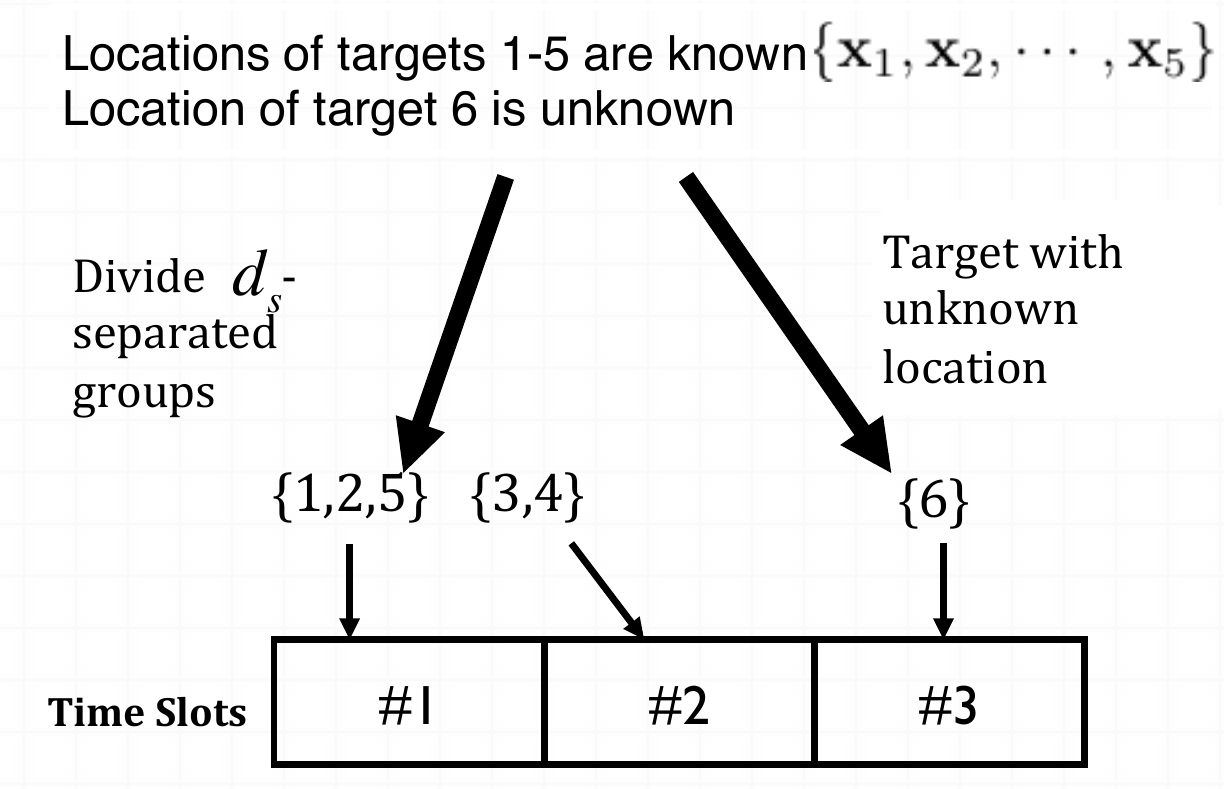}
    \end{center}
    \caption{An example of LBTA to assign time slots}
    \label{fig:LBTA}
\end{figure}
An example of LBTA is shown in Fig. \ref{fig:LBTA}, in which, six targets are presenting. We assume the locations of target $\{1,\dots,5\}$ are known and the locations of targets $6$ is still unknown. In this case, the targets with known locations are separated into two \emph{$d_s$-seperated groups}.  An exclusive time-slot is assigned to every $d_s$-seperated group and the target with unknown location.

\emph{LBTA} can help to solve both initialization problem and the risk of missing target in particle filter. At initial state, location of all $n$ targets are unknown. So $n$ time slots are required to locate the $n$ targets. From then on,  all $n$ targets share one time slot unless pairwise distances between some targets are less than $d_s$. In this case, partition method on the $n$ target is used to separate the targets into $d_s$-separated groups. Although finding the minimum number of  $d_s$-seperated group is NP-hard\cite{Fishkin:2004ty}, this problem can be effectively addressed by a greedy approach in practice when the number of targets are limited. We proposed a greedy \emph{DivideClosestTargets} algorithm to address it. 
\begin{algorithm}[t]
    \caption{DivideClosestTargets}
    \begin{algorithmic}[1]
	\REQUIRE $\{\mathbf{x}_1,\dots,\mathbf{x}_{n}\}$ and $d_s$
	\ENSURE $d_s$-seperated group partition, $\mathcal{G}_1,\dots,\mathcal{G}_{n_d}$  
	\STATE $n_d\leftarrow 1$, $\rm{tempg}_1\leftarrow\{\mathbf{x}_1,\dots,\mathbf{x}_{n}\}$, $\rm{tempg}_2=\emptyset$
	\WHILE {$\cup_{i=1}^{n_{d}-1} \mathcal{G}_i  \ne \{\mathbf{x}_1,\dots,\mathbf{x}_{n}\}$}
		\WHILE {$(\rm{MinPairWiseDis(\rm{temp}\mathcal{G}_1)}<d_{s})$}
		        \STATE $[i,j]=$ select the closest pair in $\rm{tempg}_1$
			\STATE $\rm{temp}\mathcal{G}_1=\rm{temp}\mathcal{G}_1\setminus i$, $\rm{temp}\mathcal{G}_2 =\rm{temp}\mathcal{G}_2 + i$
		\ENDWHILE
		\STATE $\mathcal{G}_{n_{d}} = \rm{temp}\mathcal{G}_1$, $n_{d}=n_{d}+1$
		\STATE $\rm{temp}\mathcal{G}_1 =\rm{temp}\mathcal{G}_2 $, $\rm{temp}\mathcal{G}_2 =\emptyset$
	\ENDWHILE
    \end{algorithmic}
    \label{algo:RandSplit}
\end{algorithm}
The algorithm always selects the closest pair in the current temp group, and put one of them into a new temp group, until all targets in current temp group have pairwise distance larger than $d_{s}$. This temp group will form a $d_{s}$-separated group. Then the algorithm process the new temp group, until all targets are assigned into $d_{s}$-separated groups. 
%
\section{Evaluation}
Both simulations and experiments were conducted to evaluate the  performances of multiple target locating in chorus mode. More specifically, the locating accuracy, efficiency of scheduling and,  robustness of chorus locating against noise were evaluated and reported in this section. 
\subsection{Simulation}

\begin{figure}[htpb]
    \begin{center}
	\includegraphics[width=.25\textwidth]{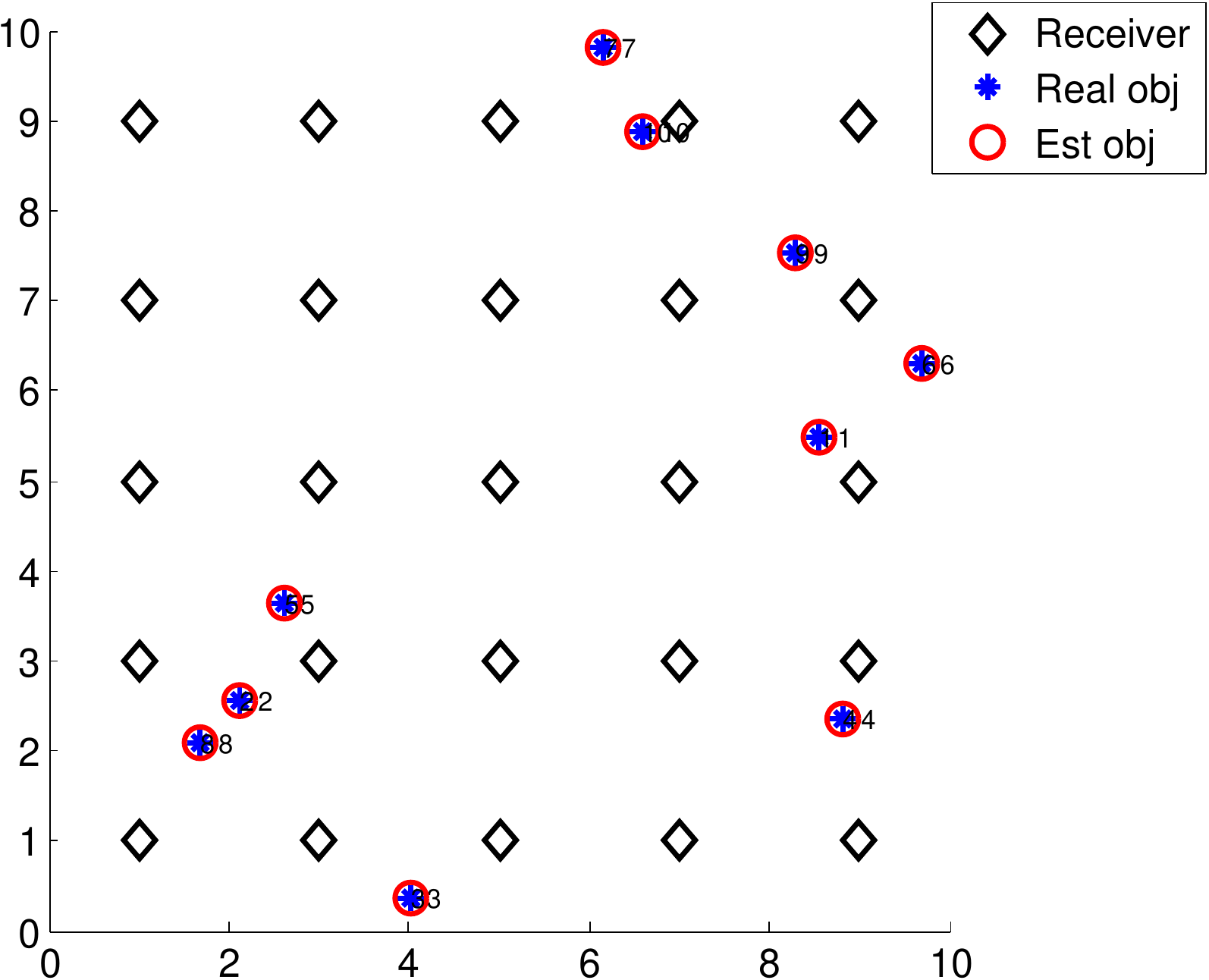}
    \end{center}
    \caption{Settings of simulation for chorus locating.}
    \label{fig:Setting}
\end{figure}
\subsubsection{Settings of Simulation}We conducted simulation by developing a multi-agent simulator in MATLAB environment. The setting of our simulation scenario is shown 
in figure \ref{fig:Setting}. The black diamonds stand for receivers, which are deployed in grid of
size $2m \times 2m$. The blue stars stand for targets. Motions of targets are identically
independent random walk, that each target walks along a line and turns a random angle every 5
seconds. The velocities of the targets are normally distributed, with $\mu=1$ and $\sigma=0.1$.
This motion character is close to real action of human in open space.  In simulation, we set the
number of targets to  10, whose actions are constrained in a box of size $10m\times 10m$. 
The length of a time slot, i.e., locating updating interval is set to $100ms$. The audible radius, i.e., $r$ of target is set to be $3m$. $\omega$, which the length of the aftershock is set to $0.33m$. The values of $r$, $\delta$ and $\omega$ in above setting are obtained from real values of Cricket \cite{Priyantha:2000hx} locating system.

\subsubsection{Locating accuracy without ranging noise} We firstly evaluate the multiple target locating and trajectory disaggregation performances when no ranging noise is incurred, i.e., ranging error is zero. The  accuracy for concurrently multiple target tracking is shown in figure \ref{fig:RandE} and \ref{fig:cdf}.  Fig.\ref{fig:RandE} plots the real trajectories and estimated trajectories, which shows that the estimated trajectories coincide well with the real trajectories even trajectories overlap. The corresponding CDF of the locating error is shown in figure \ref{fig:cdf}, which shows that more than $90\%$ of the locating error
is less than $1cm$. We found that greater than $1cm$ location error appeared when ranges was lost due to aftershock at a receiver resulting at $<3$ TOAs which leads to incorrect location estimation. 

\subsubsection{Accuracy vs. $\omega$ vs. time-slots}Location accuracy under different $\omega$ is shown in figure \ref{fig:CDFvsOmega}. The CDFs of ranging 
errors when $\omega$ equals to $33cm, 165cm, 330cm$ are presented, which are the corresponding cases when the length of the aftershock are $1ms, 5ms, 10ms$ respectively. Although the accuracy gets worse with growing of $\omega$, $90\%$ of the locating errors in the 3 cases are still very small. We investigated and found that the good locating performances against the variation of $\omega$ were contributed by LBTA. With the growth of the aftershock,  LBTA started to assign more time slots to the targets.  The slot assignment results are also shown in Figure \ref{fig:efficiency}, where the average number of concurrent targets located per times-slot are highly dependent on $\omega$.  With growing of $\omega$, the number of concurrently located targets per slot drops from 8 to 1.7. In other word, the chorus mode degenerated to the exclusive mode when $\omega$ is large, i.e., when the aftershock is long.

\subsubsection{Accuracy vs. ranging noises}Ranging noises are inevitable in ultrasound based locating systems, therefore noise resistance ability of chorus locating was also evaluated. To simulate the effect of ranging noise, positive offset is randomly added to every distance measurement. Offset is distributed from 0 to $l_o$ uniformly. The CDFs of locating errors with different $l_o (cm)$ is presented in Fig. \ref{fig:CDFvsNoise}, with $l_o$ being $1cm$, $5cm$ and $10cm$ respectively. The corresponding $90\%$-error is $1cm$, $10cm$ and $15cm$. Although there are no explicate anti-noise modules, it is shown that chorus locating can work under the impacts of the ranging noises. 
\begin{figure*}[htpb]
    \subfigure[Real and estimated trace]{
	\label{fig:RandE} 
	\begin{minipage}[b]{0.18\textwidth}
	    \centering
	    \includegraphics[width=\textwidth]{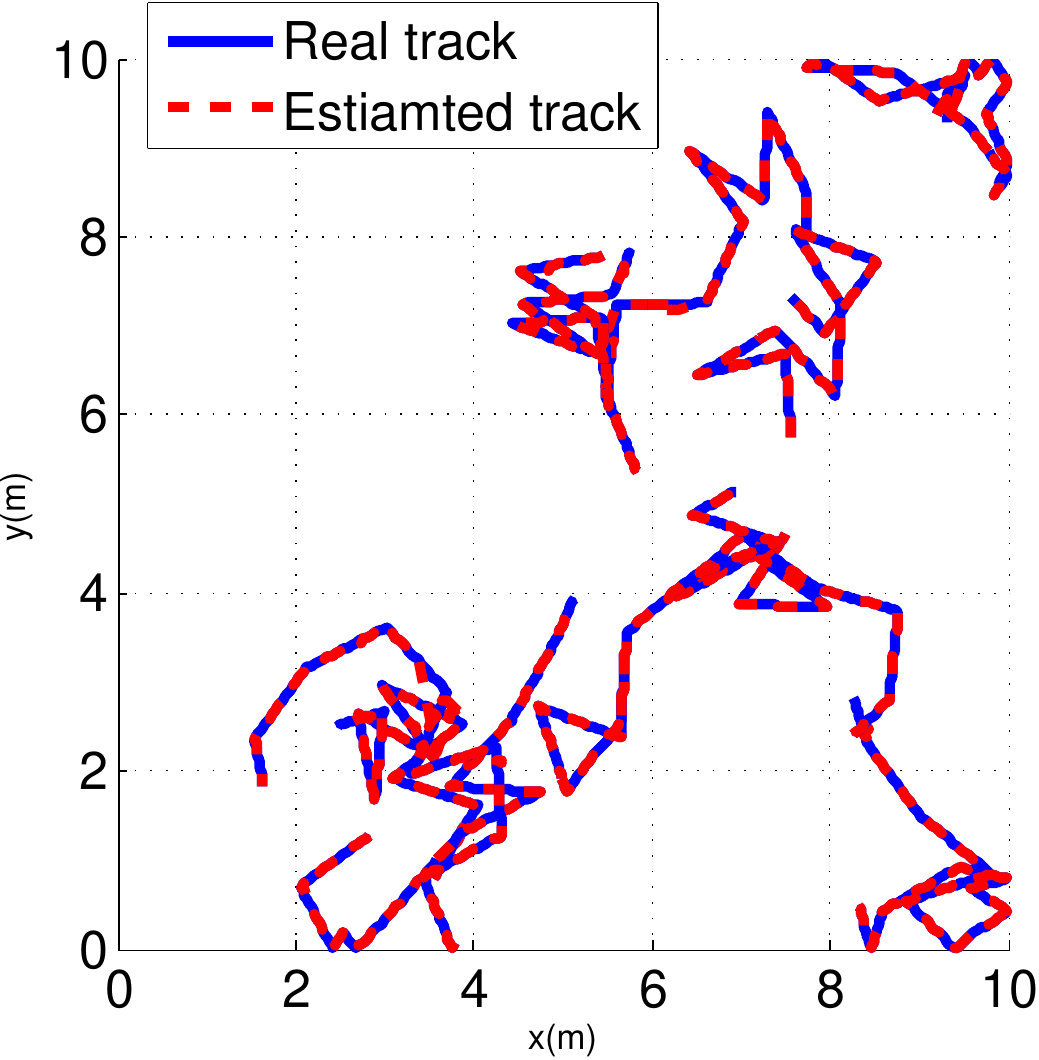}
	\end{minipage}}
    \subfigure[CDF of locating]{
	\label{fig:cdf} 
	\begin{minipage}[b]{0.18\textwidth}
	    \centering
	    \includegraphics[width=\textwidth]{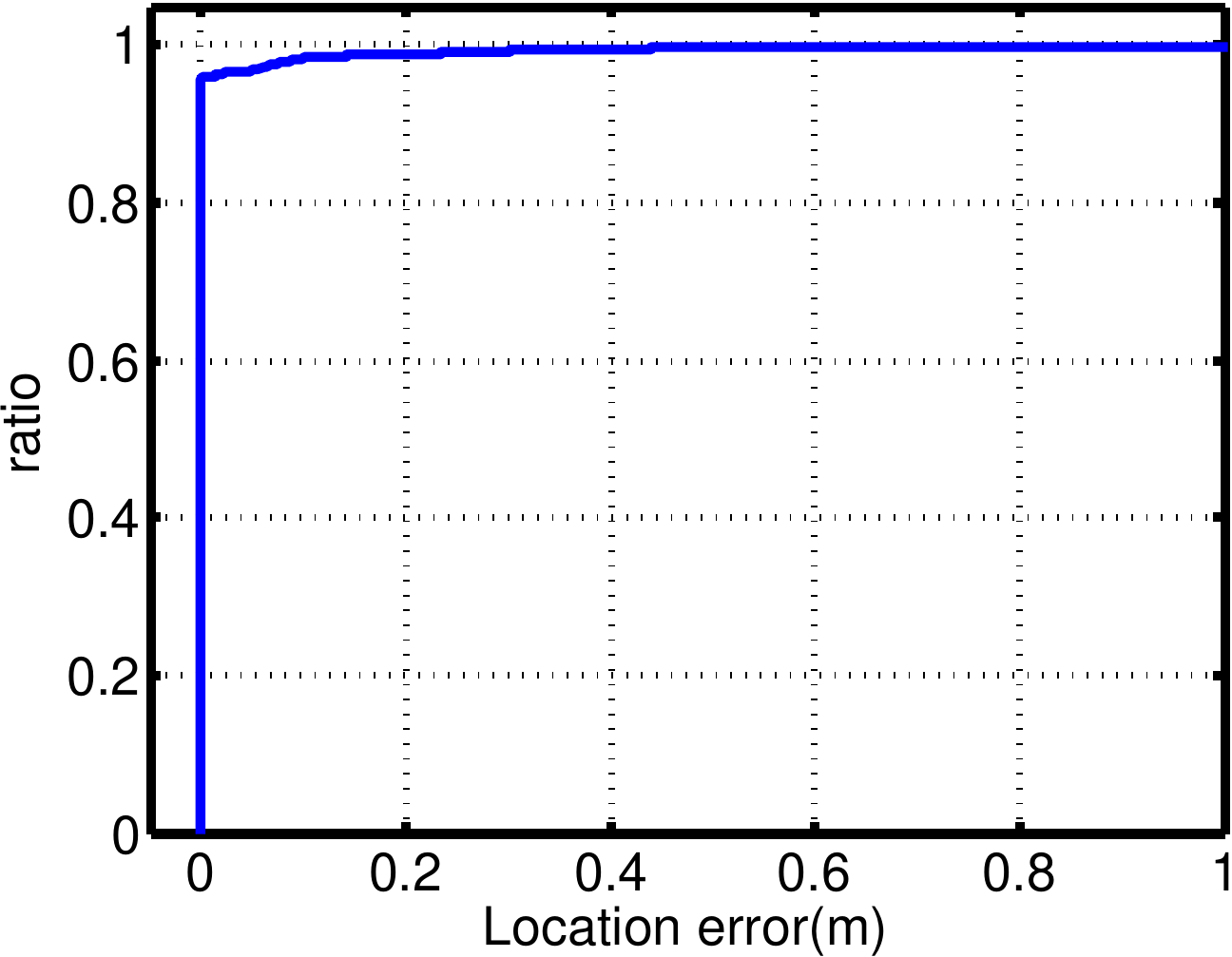}
	\end{minipage}}
    \subfigure[CDF vs. $\omega$ ]{
	\label{fig:CDFvsOmega} 
	\begin{minipage}[b]{0.18\textwidth}
	    \centering
	    \includegraphics[width=\textwidth]{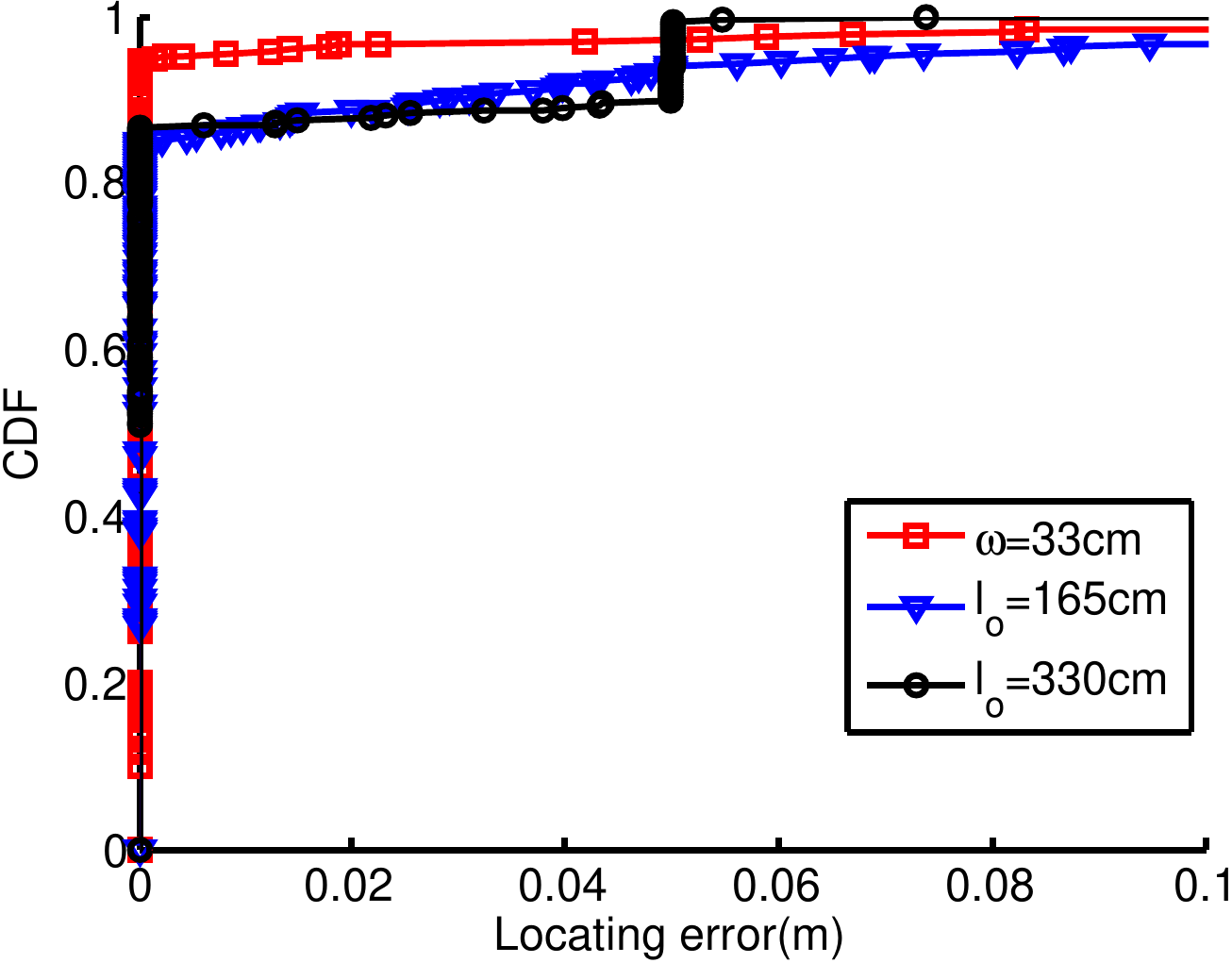}
	\end{minipage}}
    \subfigure[Efficiency vs. $\omega$ ]{
	\label{fig:efficiency} 
	\begin{minipage}[b]{0.18\textwidth}
	    \centering
	    \includegraphics[angle=90,width=\textwidth]{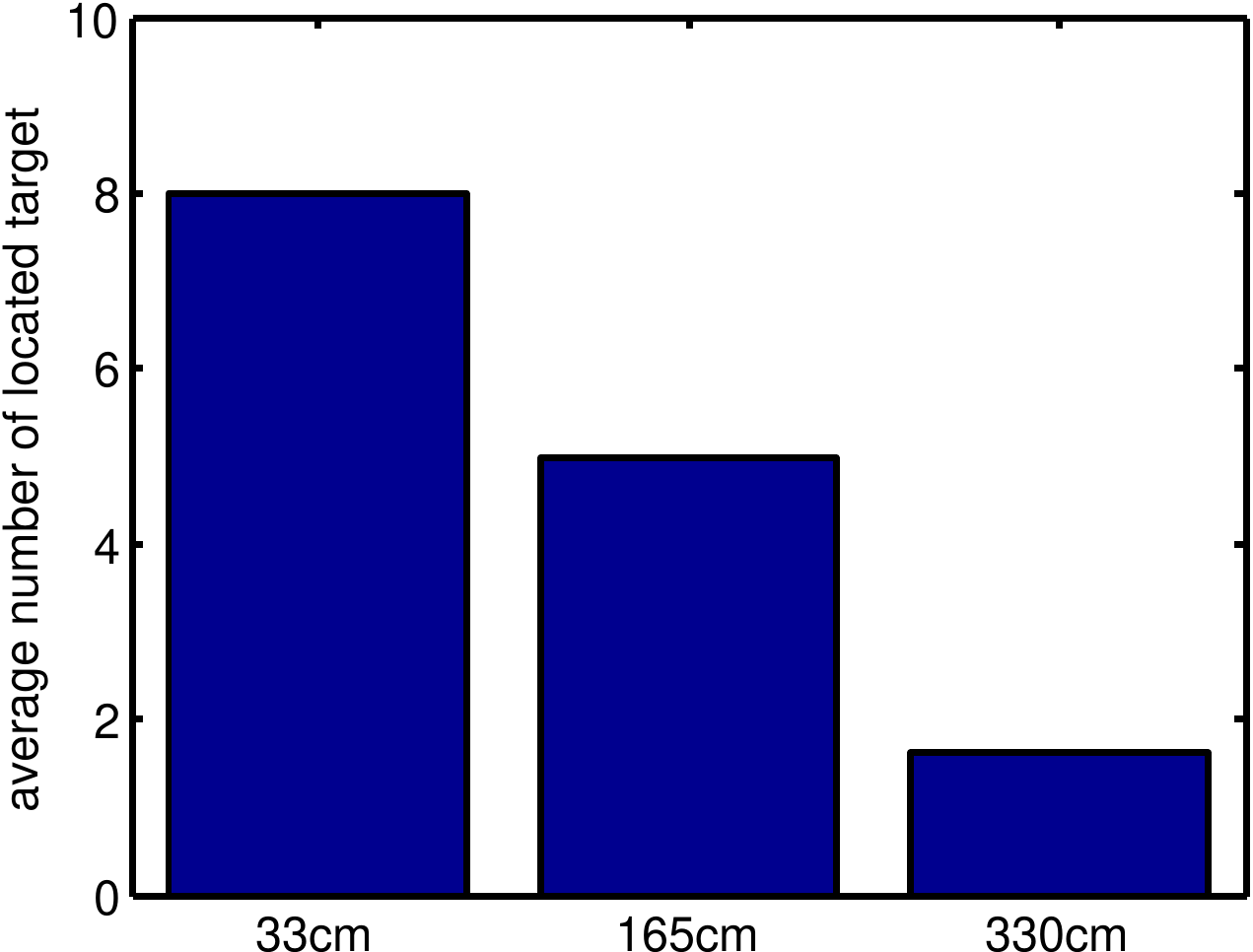}
	\end{minipage}}
    \subfigure[CDF vs. Noise ]{
	\label{fig:CDFvsNoise} 
	\begin{minipage}[b]{0.18\textwidth}
	    \centering
	    \includegraphics[width=\textwidth]{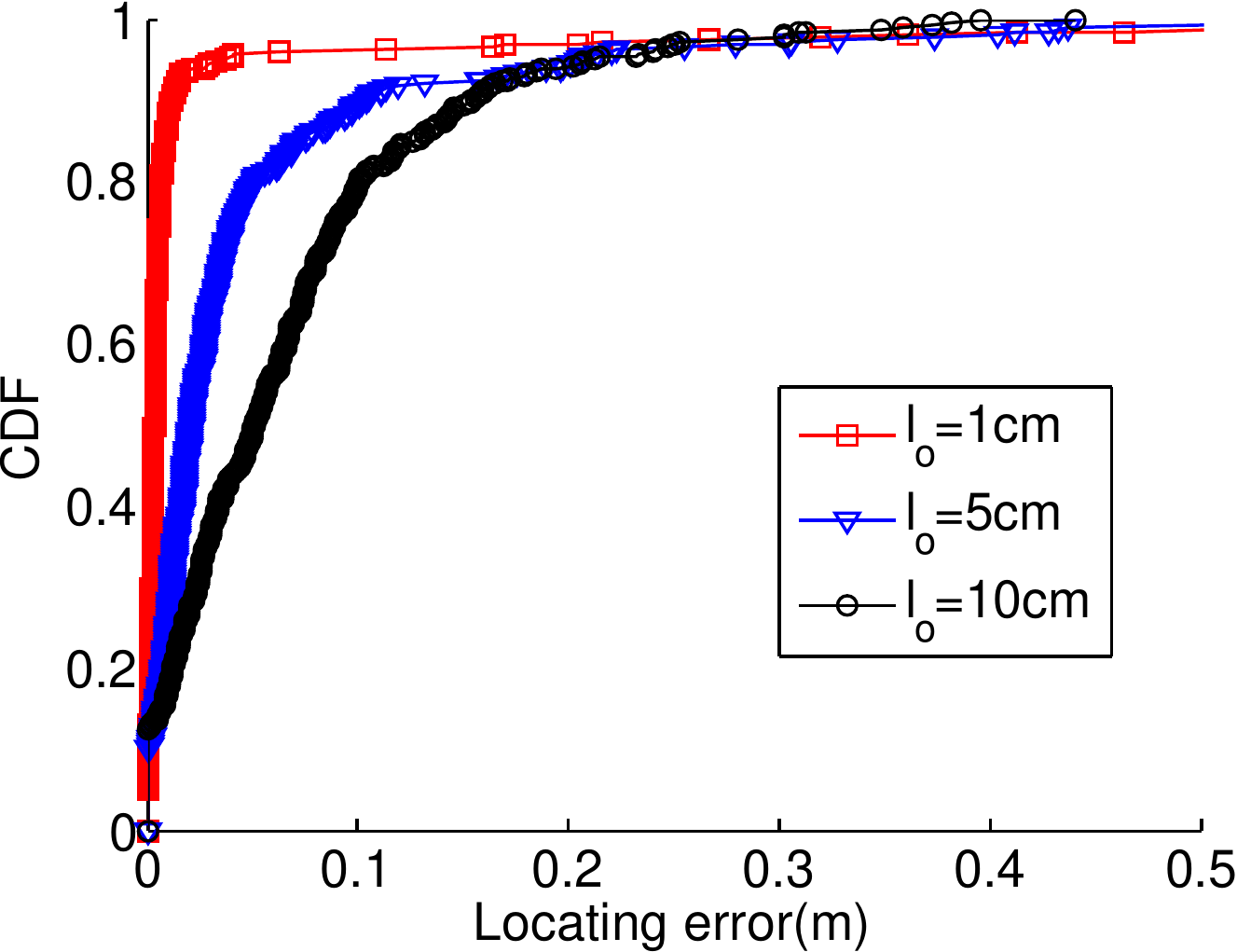}
	\end{minipage}}
    \caption{Performance evaluation obtained by simulation}
    \label{fig:PerformanceSimulation} 
\end{figure*}

\subsection{Testbed experiment}
We also conduct hardware experiments by using Cricket nodes. 4 nodes were tuned as receivers, which were deployed in an umbrella-type topology. Three nodes were programmed as targets, which were controlled by a sync-node. More specially, every target sends a NBU pulse once it hears the synchronizing signal from the sync-node. The time slot was set to $100ms$.  We modified the firmware of cricket, so that each receiver reports all detectable range measurements to a PC via rs232 cable.  Chorus locating algorithm was run at the PC end to calculate the locations for the multiple targets. The setting of the test-bed is shown in Fig.\ref{fig:testbed}. 

\begin{figure}[htpb]
    \subfigure[Receivers]{
	\label{fig:Receivers} 
	\begin{minipage}[b]{0.20\textwidth}
	    \centering
	    \includegraphics[width=\textwidth]{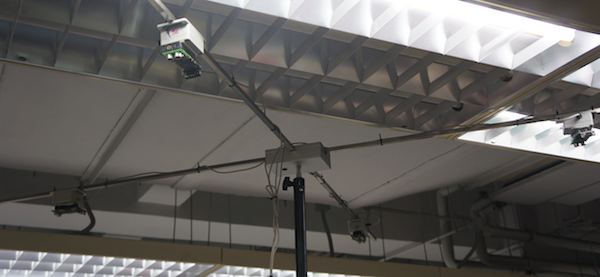}
	\end{minipage}}
    \hspace{.02\textwidth}
    \subfigure[Targets]{
	\label{fig:emitter} 
	\begin{minipage}[b]{0.25\textwidth}
	    \centering
	    \includegraphics[width=\textwidth]{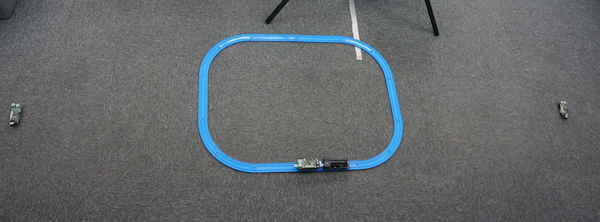}
	\end{minipage}}
    \caption{Setting of test-bed}
    \label{fig:testbed}
\end{figure}
 
Fig.\ref{fig:ExpTrace} shows the locating accuracy when a target $A$ was attached to a toy train,
which ran along a trail at $1m/s$, while two concurrent targets $b$ and $c$ were placed on the
ground. The locations of these concurrent targets were tracked by the four receivers. The obtained
trajectories of the target on the train  are presented in figure \ref{fig:ExpTrace}. Since it is
difficult to obtain the ground-truth of mobile target. CDF of static targets is presented in Fig.
\ref{fig:ExpCDF}. It is shown that more than $90\%$ of the locating errors is less than $15cm$. 

Therefore these simulation and experiment results verified the efficiency of locating multiple targets in chorus mode and the effectiveness of the proposed algorithms. They show that satisfactory accuracy can generally be obtained by locating in chorus mode. 
\begin{figure}[htpb]
    \subfigure[Trace of two target]{
	\label{fig:ExpTrace} 
	\begin{minipage}[b]{0.21\textwidth}
	    \centering
	    \includegraphics[width=\textwidth]{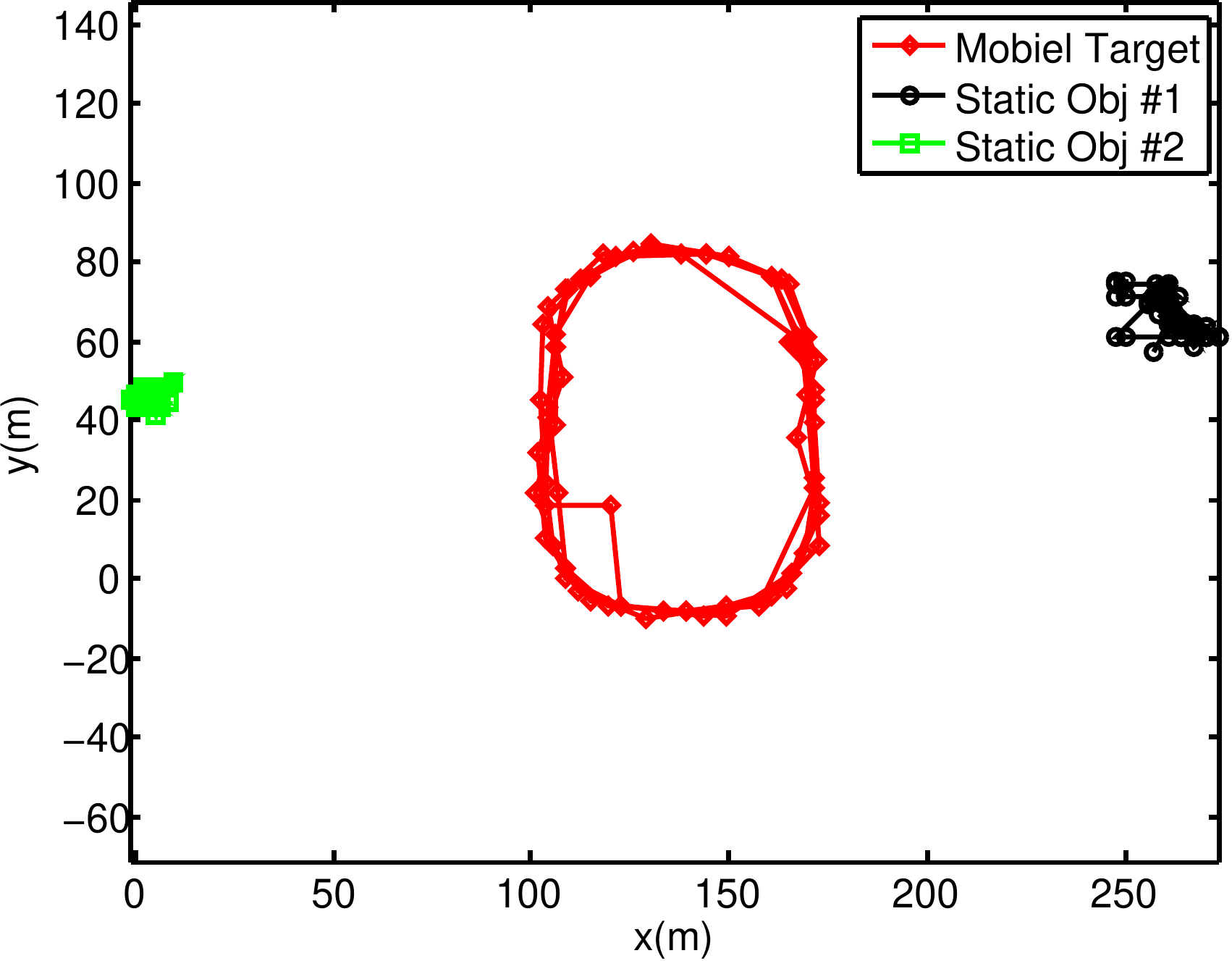}
	\end{minipage}}
    \hspace{.02\textwidth}
    \subfigure[Locating CDF of static target]{
	\label{fig:ExpCDF} 
	\begin{minipage}[b]{0.21\textwidth}
	    \centering
	    \includegraphics[width=\textwidth]{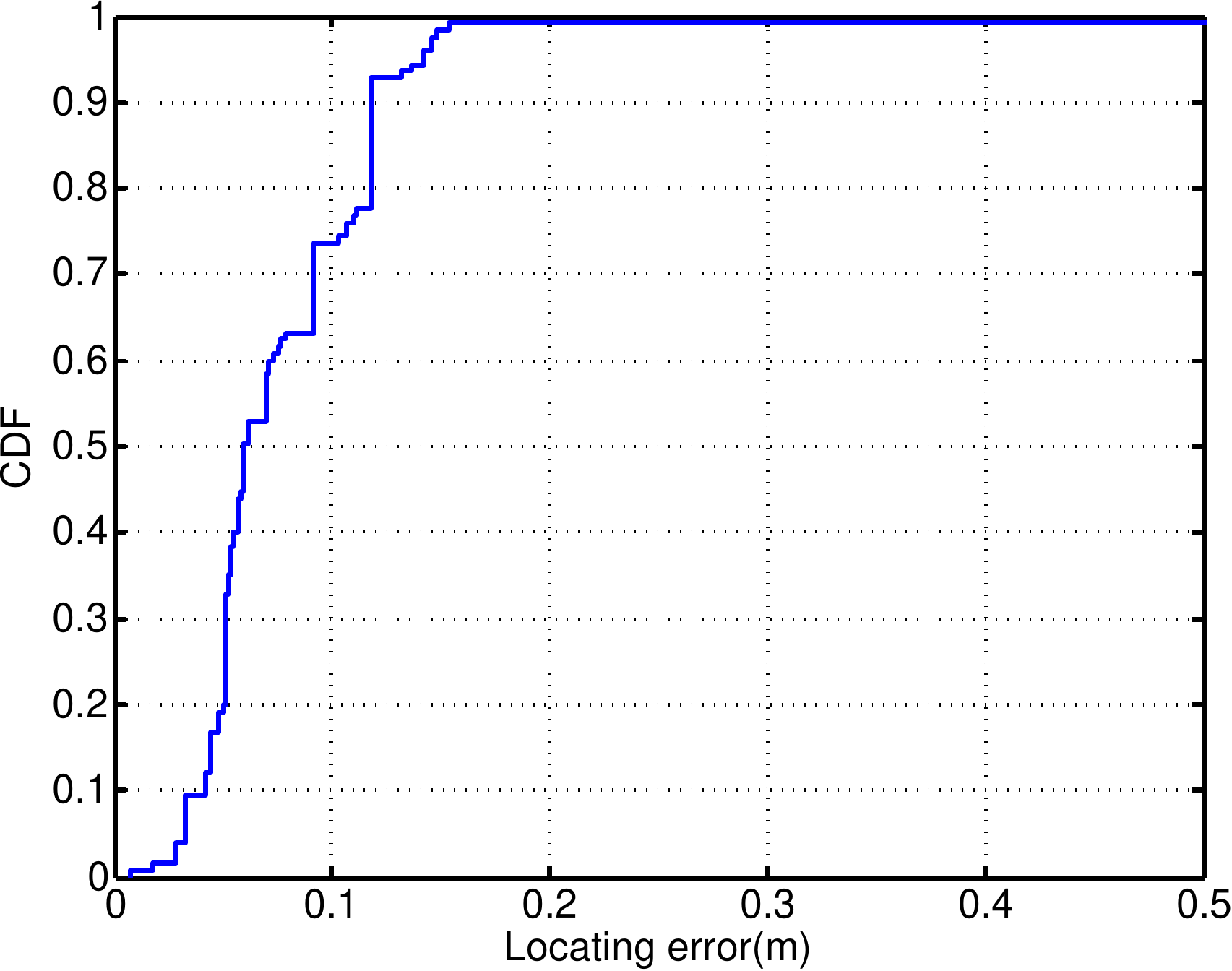}
	\end{minipage}}
    \caption{Performance evaluation obtained by testbed experiment}
    \label{fig:PerformanceExp} 
\end{figure}
\section{Conclusion}
We have investigated to locate multiple narrowband ultrasound targets in chorus mode, which is to allow the targets broadcast ultrasound concurrently to improve position updating rate, while disambiguating their locations by algorithms at the receiver end.  We investigated the geometric conditions among the targets for confidently separating the NBU waves at the receivers, and the geometrical conditions for obtaining at least three distances for each concurrent target. To deal with the anonymous distance measurements, we present consistent position generation and probabilistic particle filter algorithms to label potential sources for anonymous distances and to disambiguate the trajectories of the multiple concurrent targets. To avoid conflicts of the close by targets and for reliable initialization, we have also developed a location based concurrent transmission scheduling algorithm. Further work includes more flexible wavefront detection technique to improve threshold based detection which is to further shorten the aftershock and to make the detection be more robust to echoes and noises. 
\bibliographystyle{abbrv}
\bibliography{IPSN}
\section{Apendix}
Parameters in (\ref{equ:blind}) can be expanded as: 
\begin{equation}
    \theta=\arccos \frac{d_{a,b}}{2r}
    \label{equ:areaofchord}
\end{equation}
and 
\begin{equation}
    S_{e}=\int_0^{y_\beta}\left(  2\sqrt{r^2-y^2}-\omega v_u\sqrt{1+\frac{y^2}{d_{a,b}^2-\frac{1}{4}\omega^2 v_u^2}}\ \right)dy
    \label{equ:s3}
\end{equation}
where 
\begin{equation}
    y_\beta=\frac{b_h}{c_h}\sqrt{r^2-b_h^2-2a_hr}\\
    \label{equ:s4}
\end{equation}
refers to the $y$ coordination of intersection point of hyperbola and circle. 
\begin{equation}
    a_h=\frac{v_u\omega}{2}, b_h=\frac{d_{a,b}}{2}, c_h=\sqrt{a_h^2+b_h^2}
    \label{equ:parameter}
\end{equation}
\end{document}